\documentclass[11pt]{article}
\usepackage{amsmath}
\usepackage{amssymb}
\usepackage{amsthm}
\usepackage{graphicx}
\usepackage{booktabs}
\usepackage{caption}
\usepackage{subcaption}


\newcommand{\Var}{\mathrm{Var}}

\newcommand{\D}{\mathrm{d}}
\newcommand{\E}{\mathbb E}
\newcommand{\Q}{\mathbb Q}
\newcommand{\F}{\mathcal F}

\theoremstyle{theorem}
\newtheorem{theorem}{Theorem}[section]
\newtheorem{lemma}[theorem]{Lemma}

\theoremstyle{definition}
\newtheorem{assumption}[theorem]{Assumption}
\newtheorem{definition}[theorem]{Definition}
\newtheorem{remark}[theorem]{Remark}

\begin{document}

\title{Risk-neutral option pricing under GARCH intensity model}
\author{Kyungsub Lee\footnote{Department of Statistics, Yeungnam University, Gyeongsan, Gyeongbuk 38541, Korea}}
\date{}
\maketitle

\begin{abstract}
The risk-neutral option pricing method under GARCH intensity model is examined.
The GARCH intensity model incorporates the characteristics of financial return series such as volatility clustering, leverage effect and conditional asymmetry.
The GARCH intensity option pricing model has flexibility in changing the volatility according to the probability measure change.

\end{abstract}

\section{Introduction}

This paper develop the risk-neutral option pricing framework under the GARCH intensity model.
The financial asset returns series have interesting characteristics.
Large volatility tends to follow large volatility and small volatility tends to follow small volatility in the series of financial returns and this is called volatility clustering.
The leverage effect indicates that today's volatility has a negative correlation with past returns.
Conditional asymmetry is an asymmetric correlation between current and past volatility, depending on whether the current and past returns are positive or negative.

These characteristics are well captured by various GARCH models.
The volatility clustering is well described by original GARCH model \cite{Bollerslev}. 
\cite{Pagan&Schwert}, \cite{Nelson},  \cite{GJR}, and \cite{Zakoian} incorporate the leverage effects and \cite{Babsiri} capture the conditional asymmetry.
The GARCH intensity model introduced by \cite{ChoeLee} also describes the volatility clustering, leverage effect and conditional asymmetry in financial asset price dynamics and based on Poisson type intensity processes.

The risk-neutral option pricing is a method to determine a no-arbitrage prices of financial options with underlying assets.
The option pricing theory of \cite{BlackScholes} and \cite{merton1973theory} is based on the risk-neutral pricing framework.
The relation between risk-neutral pricing and the no-arbitrage principle was studied by \cite{HarrisonKreps} and \cite{HarrisonPliska}.
\cite{Duan1995} explained the risk-neutral option pricing under the GARCH model.
This paper extends the risk-neutral pricing idea to the GARCH intensity model.

The remainder of the paper is organized as follows:
Section \ref{Sect:model} reviews the GARCH intensity model.
Section \ref{Sect:option} examines the mathematical analysis on the risk-neutral option pricing under the GARCH intensity model.
Section \ref{Sect:Generalization} extends the option pricing theory to the generalized GARCH intensity model.
Section \ref{Sect:concl} concludes the paper.

\section{GARCH intensity model}\label{Sect:model}
First, we review the GARCH intensity model introduced by~\cite{ChoeLee}.
In the model, the asset price process movement is described by two Poisson-type processes with time-varying intensity processes.
A probability space $(\Omega, \mathcal F = \mathcal F(T), \mathbb P)$ with a filtration $\mathcal F(t)$, $0\leq t\leq T$, is given.

\begin{assumption}[\cite{ChoeLee}]\label{Assumption}
We are given $\mathcal F(t)$-adapted r.c.l.l. processes $N_{+}(t)$, $N_{-}(t)$
and positive $\mathcal F(t)$-adapted r.c.l.l. processes $\lambda_{+}(t)$, $\lambda_{-}(t)$
for $0 \leq t \leq T$ satisfying the following conditions:\\
(i) (Discrete observation time) $\Delta t = T/N$ and $t_i = i\Delta t$, $0 \leq i \leq N$.\\
(ii) (Conditional distribution)
$\left( N_{\pm}(t) - N_{\pm}(t_{i-1}) \right) | \mathcal F(t_{i-1})$ has Poisson distribution
with intensity $\lambda_{\pm}(t_{i-1})( t- t_{i-1})$,
 $t_{i-1} \leq t \leq t_{i}$.
Hence
$$\mathbb P( N_{\pm}(t_i) -  N_{\pm}(t_{i-1})= k | \mathcal F(t_{i-1}))
=  \frac{(\lambda_{\pm}(t_{i-1}) \Delta t)^k}{ k!}\exp{(-\lambda_{\pm}(t_{i-1}) \Delta t )}.$$
(iii) (Conditional independence)
$N_+(t) - N_+(t_{i-1})$ and $N_-(t) - N_-(t_{i-1})$ are conditionally independent given
$\mathcal F(t_{i-1})$, $t_{i-1} \leq t \leq t_{i}$.\\
(iv) (Step process)
$\lambda_{+}(t) = \lambda_{+}(t_{i-1})$ and $\lambda_{-}(t) = \lambda_{-}(t_{i-1})$, $t_{i-1} \leq t < t_{i}$.\\
(v) (Predictability) $\lambda_+(t_i)$  depends on $N_{\pm}(t_{i-k+1})$ and $\lambda_{\pm}(t_{i-k})$, $1\leq k \leq i-1$,
and similarly for $\lambda_-(t_i)$.\\
(vi) (Asset price) With a constant $\delta>0$, the price process is
$$S(t)= S(0)\exp\left( \delta ( N_{+}(t) - N_{-}(t))\right).$$
\end{assumption}

With a price jump at time $t$, $S(t)=e^{\delta}S(t-)$ or $S(t)=e^{-\delta}S(t-)$ depending on the direction of the jump.
The asset price can also be represented by a stochastic differential equation given by
$$\D S(t)= (e^\delta -1)S(t-)\D N_{+} (t) + (e^{-\delta}-1)S(t-)\D N_{-} (t).$$
Let
$$X(t_i)= \log  \frac{S(t_{i})}{S(t_{i-1})}$$
be the log-return over the period $[t_{i-1},t_i]$.
Then the integer-valued random variable $M_i$ defined by
$$M_i=\frac{X(t_i)}{\delta}=N_{+}(t_i)-N_{-}(t_i)-(N_{+}(t_{i-1})-N_{-}(t_{i-1}))$$ has the conditional Skellam distribution on $\mathcal F(t_{i-1})$
\begin{eqnarray*}
& & f(m | \lambda_{+}(t_{i-1}),\lambda_{-}(t_{i-1})) \\
&=& \exp \{ -\lambda_{+}(t_{i-1}) - \lambda_{-}(t_{i-1}) \}
\left( \frac{\lambda_{+}(t_{i-1})}{\lambda_{-}(t_{i-1})} \right)^{m/2} I_{|m|}( 2\sqrt{ \lambda_{+}(t_{i-1})\lambda_{-}(t_{i-1})} )
\end{eqnarray*}
where $I_{a}$ is the modified Bessel function of the first kind defined by
$$ I_{a}(x) = \sum_{k=0}^{\infty} \frac{1}{ k! \,\Gamma(k+a+1)} \left( \frac{x}{2} \right)^{2k+a}.$$
Since the closed form of conditional probability density is exist, the maximum likelihood estimation can be easily employed.

\begin{definition}[Decomposition of Log-Return]\label{Def:components}
Define $\mu(t_i)$, $\gamma(t_i)$, $\varepsilon(t_i)$ by
\begin{eqnarray*}
\mu(t_i) &=& \{ (e^\delta - 1) \lambda_{+} (t_{i-1}) + (e^{-\delta} - 1)\lambda_{-} (t_{i-1}) \} \Delta t \\
\gamma(t_i) &=& \{ (e^\delta - 1 -  \delta)  \lambda_+ (t_{i-1}) + (e^{-\delta} - 1 +  \delta) \lambda_- (t_{i-1}) \} \Delta t \\
\varepsilon(t_i) &=& X(t_i) - \mathbb E [X(t_i) | \mathcal F(t_{i-1})].
\end{eqnarray*}
\end{definition}

Recall that under Assumption~\ref{Assumption}, 
\begin{eqnarray*}
\mathbb E[X(t_i)|\mathcal F(t_{i-1})]  &=& \delta(\lambda_{+}(t_{i-1})- \lambda_{-}(t_{i-1})) \Delta t\\
\Var(X(t_i)|\mathcal F(t_{i-1}))&=& \delta^{2}(\lambda_{+}(t_{i-1})+\lambda_{-}(t_{i-1})) \Delta t\\
\mathbb E[ \exp(X(t_i))|\mathcal F(t_{i-1}))&=& \exp(\mu(t_i))
\end{eqnarray*}
and
$$X(t_i) = \mu(t_i) - \gamma(t_i) + \varepsilon(t_i),$$
where $\mu(t_i)$ is a drift term, $\gamma(t_i)$ is an It\^{o} correction factor,
and $\varepsilon(t_i)$ is a $\mathcal F(t_{i})$-measurable shock occurred during time interval $[t_{i-1}, t_i]$.

As in \cite{ChoeLee}, to capture volatility clustering, GARCH\cite{Bollerslev}-type modeling is applied:
$$
\lambda_{\pm}(t_i) = \omega_{\pm} + \alpha_{\pm}  \varepsilon^2(t_{i}) + \beta_{\pm} \lambda_{\pm}(t_{i-1})
$$
for some constants $\omega_{\pm}, \alpha_{\pm},$ and $\beta_{\pm}$.
If 
$$\beta_{+} = \beta_{-}=\beta$$ in the GARCH intensity model, 
then the GARCH-type time varying volatility is obtained.
To show this, let $h(t_i)$ be a one-step-ahead conditional variance of return at $t_i$, then
$$h(t_i) = \Var(X(t_i)|\mathcal F(t_{i-1}))/\Delta t = \delta^2(\lambda_{+}(t_{i-1}) + \lambda_{-}(t_{i-1})),$$
and
\begin{equation}\label{Eq:variance-intensity}
h(t_i)
= \delta^{2}(\omega_{+} + \omega_{-}) + \beta h(t_{i-1}) + {\delta^2}(\alpha_{+} + \alpha_{-} )\varepsilon^2(t_{i-1}).
\end{equation}
This is consistent with conditional variance modeling in GARCH.
In addition, we also consider the GJR\cite{GJR} GARCH-type intensity model:
$$
\lambda_{\pm}(t_i) = \omega_{\pm} + \alpha_{\pm}  \varepsilon^2(t_{i}) + \beta_{\pm} \lambda_{\pm}(t_{i-1})
$$
where
$$I(t_{i})=\left\{
  \begin{array}{ll}
    1, & \hbox{$\varepsilon(t_{i})<0$} \\
    0, & \hbox{$\varepsilon(t_{i})\geq 0$}.
  \end{array}
\right.
$$

\section{Risk-neutral option pricing}\label{Sect:option}

We propose an option pricing method for intensity models
by constructing an equivalent measure under which the discounted stock price process is a martingale.
We assume that the underlying asset pays no dividend and let $r>0$ be the risk-free interest rate.
In the following we choose new intensities for an equivalent martingale measure.

\begin{definition}\label{Def:RadonNikodym}
Take a pair of positive  r.c.l.l. adapted step processes $\widetilde\lambda_{+}$ and $\widetilde\lambda_{-}$
such that
\begin{eqnarray}\label{Eq:risk-free-rate}
(e^\delta -1)\widetilde\lambda_{+}(t) + (e^{-\delta} -1)\widetilde\lambda_{-}(t) = r .
\end{eqnarray}
(We take the right hand side equal to $r$ since the left hand side is regarded as drift under a risk-neutral measure.)\\
(i) Let
$$D(t) = \lambda_{+}(t) + \lambda_{-}(t) - \widetilde\lambda_{+}(t) - \widetilde \lambda_{-}(t)$$
for $0 \leq t \leq T$, and let $U(0) = 0$ and for $t_{i-1} < t \leq t_{i}$ define
$$
 U(t) = (N_+(t) - N_+(t_{i-1})) \log \frac{\widetilde\lambda_{+}(t) }{\lambda_{+}(t)  }
+ (N_-(t) - N_-(t_{i-1})) \log \frac{\widetilde\lambda_{-}(t) }{\lambda_{-}(t) } .$$
(ii)
Let
$Z(0) =1$
and for $t_{i-1} < t \leq t_{i}$ define
$$Z(t) = Z(t_{i-1}) \exp \{ D(t) (t-t_{i-1}) +  U(t) \}.$$
\end{definition}

\begin{theorem}\label{Thm:RadonNikodym}
$\{ Z(t) \}_{ 0 \leq t \leq T } $ is a $\mathbb P$-martingale.
\end{theorem}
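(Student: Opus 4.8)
The plan is to verify the three defining properties of a martingale — adaptedness, integrability, and the conditional-expectation identity — with essentially all the work in the last one, which I would reduce to a computation on a single time subinterval. Adaptedness of $Z$ is immediate from Definition~\ref{Def:RadonNikodym}, since $N_\pm$, $\lambda_\pm$ and $\widetilde\lambda_\pm$ are all $\mathcal F(t)$-adapted and $Z(t)$ is built from them together with the $\mathcal F(t_{i-1})$-measurable factor $Z(t_{i-1})$. Integrability will follow a posteriori from $Z \ge 0$ together with $\mathbb E[Z(t)] = 1$, which is the $s = 0$ case of the identity below.

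The core is to show $\mathbb E[Z(t) \mid \mathcal F(s)] = Z(s)$ for all $0 \le s \le t \le T$. First I would reduce to the case where $s$ and $t$ lie in a common subinterval $[t_{i-1}, t_i]$: for general $s \le t$, insert the grid points lying between them and apply the tower property, so that the full identity follows once it is known for (a) full intervals $\mathbb E[Z(t_i)\mid\mathcal F(t_{i-1})] = Z(t_{i-1})$ and (b) partial intervals at the two ends. All of these are instances of the single-interval claim.

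Within $[t_{i-1}, t_i]$ the intensities are frozen at $\lambda_\pm(t_{i-1})$ and $\widetilde\lambda_\pm(t_{i-1})$ by the step-process property (iv), so I would write, for $t_{i-1}\le s \le t \le t_i$,
$$\frac{Z(t)}{Z(s)} = \exp\{ D(t)(t - s) + (N_+(t)-N_+(s))\log\tfrac{\widetilde\lambda_+}{\lambda_+} + (N_-(t)-N_-(s))\log\tfrac{\widetilde\lambda_-}{\lambda_-} \},$$
where $D(t)$ equals the constant $\lambda_+ + \lambda_- - \widetilde\lambda_+ - \widetilde\lambda_-$ on this interval. Exponentiating turns the logarithms into the powers $(\widetilde\lambda_+/\lambda_+)^{N_+(t)-N_+(s)}$ and $(\widetilde\lambda_-/\lambda_-)^{N_-(t)-N_-(s)}$. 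By the conditional independence (iii) and the Poisson increment law (ii) — used in the form that, given $\mathcal F(s)$, the increments $N_\pm(t)-N_\pm(s)$ are independent and Poisson with mean $\lambda_\pm(t_{i-1})(t-s)$ — the conditional expectation factorizes, and each factor is evaluated with the Poisson probability generating function $\mathbb E[z^{N}] = \exp(\theta(z-1))$. This yields
$$\mathbb E\!\left[\tfrac{Z(t)}{Z(s)}\,\Big|\,\mathcal F(s)\right] = \exp\{(t-s)[\,D(t) + (\widetilde\lambda_+ - \lambda_+) + (\widetilde\lambda_- - \lambda_-)\,]\},$$
and the bracket vanishes identically by the definition of $D(t)$; since $Z(s)$ is $\mathcal F(s)$-measurable this gives $\mathbb E[Z(t)\mid\mathcal F(s)] = Z(s)$.

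I expect the only delicate point to be the justification, at an interior time $s \in (t_{i-1}, t_i)$, that the increment $N_\pm(t) - N_\pm(s)$ is conditionally Poisson with mean $\lambda_\pm(t_{i-1})(t-s)$ and conditionally independent given $\mathcal F(s)$; this is the stationary-independent-increments property of the conditionally Poisson processes that Assumption~\ref{Assumption}(ii)--(iii) specify on each subinterval, and it is what lets the single-interval computation run for interior conditioning times rather than only at the left endpoint. Notably, the risk-neutral constraint~\eqref{Eq:risk-free-rate} plays no role here: the martingale property of $Z$ is a pure Radon--Nikodym statement, and \eqref{Eq:risk-free-rate} enters only later when identifying the drift of the discounted price under the new measure.
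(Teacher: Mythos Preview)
Your proof is correct and shares the paper's overall architecture: reduce to a single subinterval via the tower property, then exploit conditional independence to factor the computation. The difference lies in how the single-interval identity is obtained. The paper factors $Z(t)/Z(t_{i-1}) = Z_{+,i}(t)\,Z_{-,i}(t)$ and recognises each $Z_{\pm,i}$ as the stochastic (Dol\'eans) exponential of a compensated Poisson integral, so that each factor is a martingale by preservation of the martingale property under stochastic integration; it then uses conditional independence to multiply the two expectations. You instead compute $\mathbb E[Z(t)/Z(s)\mid\mathcal F(s)]$ directly from the Poisson probability generating function $\mathbb E[z^{N}]=\exp(\theta(z-1))$ and check that the resulting exponent cancels by the definition of $D$. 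Your route is more elementary---no stochastic calculus is needed---and it handles interior conditioning times $s\in(t_{i-1},t_i)$ in the same stroke, whereas the paper appeals to the martingale property of the $Z_{\pm,i}$ on the whole subinterval to get $\mathbb E[Z(t_j)\mid\mathcal F(s)]=Z(s)$ at the boundary step. Your closing remark that the risk-neutral constraint~\eqref{Eq:risk-free-rate} plays no role here is also accurate: neither proof invokes it.
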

\begin{proof}
For $t_{i-1} < t \leq t_i$, define $Z_{+,i}(t)$ and $Z_{-,i}(t)$ by $Z_{\pm}(t_{i-1}) = 1$ and
$$ \D Z_{\pm,i}(t) = Z_{\pm,i}(t-) \frac{\widetilde\lambda_{\pm}(t_{i-1})
- \lambda_{\pm}(t_{i-1})}{\lambda_{\pm}(t_{i-1})}  \D ( N_{\pm}(t) - \lambda_{\pm}(t_{i-1})t).$$
Then
\begin{eqnarray*}
Z_{\pm,i}(t) &=& \exp \bigg\{ (\lambda_{\pm}(t_{i-1}) - \widetilde\lambda_{\pm}(t_{i-1}))(t-t_{i-1}) \\
&+&  (N_{\pm}(t) - N_{\pm}(t_{i-1}) ) \log \frac{\widetilde\lambda_{\pm}(t_{i-1}) }{\lambda_{\pm}(t_{i-1}) } \bigg\}.
\end{eqnarray*}
(For the details of the proof, see \cite{Elliott&Kopp}.)
Since $N_{\pm}(t) - \lambda_{\pm}(t_{i-1})t$ are martingales,
$Z_{\pm ,i}(t)$ are martingales for $t_{i-1} \leq u < t$
and hence
$$\mathbb E[Z_{\pm ,i}(t)| \mathcal F(t_{i-1})]=1.$$
Note that
$$Z_{+,i}(t)Z_{-,i}(t) =  \frac{Z(t)}{Z(t_{i-1})}.$$
Since $N_{+}(t) - N_{+}(t_{i-1})$ and $N_{-}(t) - N_{-}(t_{i-1})$ are conditionally independent
given $\mathcal F(t_{i-1})$, we have
$$
\mathbb E\left[\frac{Z(t)}{Z(t_{i-1})} \bigg| \mathcal F(t_{i-1})\right]
=\mathbb E[Z_{+,i}(t) | \mathcal F(t_{i-1})] \; \mathbb E[Z_{-,i}(t) | \mathcal F(t_{i-1})] =1.
$$
Take $s,t$ such that $t_{j-1} < s \leq t_j$ and $s<t$.
Then
$$ \mathbb E[ Z(t_{j}) | \mathcal F(s)] = Z(s)$$
and
\begin{eqnarray*}
& &\mathbb E[Z(t) | \mathcal F(s)]\\
&=& \mathbb E\left[\frac{Z(t)}{Z(t_{i-1})}\frac{Z(t_{i-1})}{Z(t_{i-2})} \times \dots \times \frac{Z(t_{j+1})}{Z(t_{j})}Z(t_{j}) \bigg| \mathcal F(s) \right] \\
&=& \mathbb E \left[\mathbb E\left[\frac{Z(t)}{Z(t_{i-1})} \frac{Z(t_{i-1})}{Z(t_{i-2})} \times \dots \times \frac{Z(t_{j+1})}{Z(t_{j})}Z(t_{j}) \bigg| \mathcal F(t_{i-1})\right] \bigg| \mathcal F(s) \right] \\
&=& \mathbb E \left[\frac{Z(t_{i-1})}{Z(t_{i-2})} \times \dots \times \frac{Z(t_{j+1})}{Z(t_{j})}Z(t_{j}) \mathbb E\left[\frac{Z(t)}{Z(t_{i-1})} \bigg| \mathcal F(t_{i-1})\right] \bigg| \mathcal F(s) \right] \\
&=& \mathbb E\left[\frac{Z(t_{i-1})}{Z(t_{t-2})} \times \dots \times \frac{Z(t_{j+1})}{Z(t_{j})}Z(t_{j}) \bigg| \mathcal F(s) \right] \\
& & \qquad \vdots\\
&=& \mathbb E[Z(t_j)| \mathcal F(s)]\\
&=& Z(s).
\end{eqnarray*}
\end{proof}

\begin{definition}\label{Def:Q}
Define an equivalent probability measure $\mathbb Q$ by
$$\mathbb Q(A) = \int_{A} Z(T) \D \mathbb P \quad \textrm{for }A \in \mathcal F.$$
\end{definition}

Now we change intensities.
\begin{lemma}\label{Lemma:Change_of_intensities}
The intensities of $N_{+}$ and $N_{-}$ under $\mathbb Q$ are
given by $\widetilde\lambda_{+}$ and $\widetilde\lambda_{-}$, respectively.
\end{lemma}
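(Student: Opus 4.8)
The plan is to identify the $\mathbb{Q}$-conditional law of the increment pair $\left(N_+(t)-N_+(t_{i-1}),\, N_-(t)-N_-(t_{i-1})\right)$ given $\mathcal{F}(t_{i-1})$ on each interval $t_{i-1} < t \leq t_i$, and to check that it is the product of two Poisson laws with parameters $\widetilde\lambda_+(t_{i-1})(t-t_{i-1})$ and $\widetilde\lambda_-(t_{i-1})(t-t_{i-1})$. This simultaneously verifies conditions (ii) and (iii) of Assumption~\ref{Assumption} under $\mathbb{Q}$ with the new intensities in place of $\lambda_\pm$, which is exactly what the statement asserts.

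First I would record the abstract Bayes rule for conditional expectation under the measure change. Since $\mathbb{Q}$ is defined through $Z(T)$ and $Z$ is a $\mathbb{P}$-martingale by Theorem~\ref{Thm:RadonNikodym}, for any bounded $\mathcal{F}(t)$-measurable $Y$ and $s \leq t$ one has
\[
\mathbb{E}^{\mathbb{Q}}[\,Y \mid \mathcal{F}(s)\,] = \frac{\mathbb{E}[\, Y\, Z(T) \mid \mathcal{F}(s)\,]}{Z(s)} = \frac{\mathbb{E}[\, Y\, Z(t) \mid \mathcal{F}(s)\,]}{Z(s)},
\]
where the second equality uses the tower property together with $\mathbb{E}[Z(T)\mid\mathcal{F}(t)]=Z(t)$. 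Taking $s = t_{i-1}$, the relevant density factor reduces to $Z(t)/Z(t_{i-1}) = Z_{+,i}(t)\,Z_{-,i}(t)$, whose explicit exponential form is already available from the proof of Theorem~\ref{Thm:RadonNikodym}.

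Next I would compute the conditional probability under $\mathbb{Q}$ directly. Taking $Y = \mathbf{1}\{N_+(t)-N_+(t_{i-1})=k,\ N_-(t)-N_-(t_{i-1})=l\}$ and inserting the factorisation of $Z(t)/Z(t_{i-1})$, the indicator fixes the jump counts, so the density factor becomes
\[
\exp\{(\lambda_+ - \widetilde\lambda_+)(t-t_{i-1})\}\Big(\tfrac{\widetilde\lambda_+}{\lambda_+}\Big)^{k}
\exp\{(\lambda_- - \widetilde\lambda_-)(t-t_{i-1})\}\Big(\tfrac{\widetilde\lambda_-}{\lambda_-}\Big)^{l},
\]
with all intensities evaluated at $t_{i-1}$. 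Multiplying this by the $\mathbb{P}$-conditional probability, which factorises into the two Poisson masses with parameters $\lambda_\pm(t_{i-1})(t-t_{i-1})$ by conditions (ii) and (iii), the powers of $\lambda_\pm$ and the exponential drift corrections cancel precisely: the upward count collapses to $\frac{(\widetilde\lambda_+(t-t_{i-1}))^{k}}{k!}e^{-\widetilde\lambda_+(t-t_{i-1})}$ and the downward count to the analogous expression in $l$ and $\widetilde\lambda_-$. The resulting joint mass is the product of two Poisson masses with the new parameters, which yields both the Poisson marginals and the conditional independence under $\mathbb{Q}$.

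I expect the only real care to be needed in the bookkeeping of the measure-change formula rather than in any single estimate: one must justify replacing $Z(T)$ by $Z(t)$ through the martingale property, and verify that the cancellation is valid on the set where $\lambda_\pm(t_{i-1})>0$, which is guaranteed by the positivity assumed of the intensity processes. An equivalent and perhaps cleaner route would be to compute the $\mathbb{Q}$-conditional joint probability generating function $\mathbb{E}^{\mathbb{Q}}[\,u^{N_+(t)-N_+(t_{i-1})} v^{N_-(t)-N_-(t_{i-1})}\mid\mathcal{F}(t_{i-1})\,]$ and recognise it as that of independent Poisson variables with means $\widetilde\lambda_\pm(t_{i-1})(t-t_{i-1})$; this packages the same cancellation as a single identity.
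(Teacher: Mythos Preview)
Your argument is correct and rests on the same mechanism as the paper's proof: the abstract Bayes formula reduces the $\mathbb{Q}$-conditional expectation to a $\mathbb{P}$-expectation against the density ratio $Z(t)/Z(t_{i-1}) = Z_{+,i}(t)Z_{-,i}(t)$, and the explicit exponential form of $Z_{\pm,i}$ combined with the Poisson law under $\mathbb{P}$ produces the new Poisson parameters. The only difference is the choice of test functional: the paper computes the marginal conditional moment generating function $\mathbb{E}^{\mathbb{Q}}[e^{\xi(N_+(t)-N_+(t_{i-1}))}\mid\mathcal{F}(t_{i-1})]$ and recognises it as that of a Poisson law with mean $\widetilde\lambda_+(t_{i-1})(t-t_{i-1})$, treating $N_-$ symmetrically, whereas you compute the joint conditional probability mass (or equivalently the joint generating function). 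Your packaging is slightly more informative, since it simultaneously delivers the conditional independence of the two increments under $\mathbb{Q}$, a point the paper's marginal computation leaves implicit.
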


\begin{proof}
Since $\{N_+(t) - N_+(u)\}| \mathcal F(u) $ has a Poisson distribution for $t_{i-1}\leq u <t\leq t_i$, we have
$$
\mathbb E^{\mathbb P} \left[ \exp\left\{\eta_i ( N_+(t) - N_+(u) )  \right\} \big| \mathcal F(u)\right]
= \exp \{\lambda_+(t_{i-1}) (t-u) (e^{\eta_i} - 1) \}
$$
for an $\mathcal F(t_{i-1})$-measurable random variable $\eta_i$.
We will show that the same relation holds for $\mathbb Q$ and $\widetilde{\lambda}_{+}$.
Define $Z_{\pm,i}(t)$ as in the proof of Theorem~\ref{Thm:RadonNikodym}.
For a constant $\xi$, we have
\begin{eqnarray*}
& & \mathbb E^{\mathbb Q}[ \exp\{ \xi (N_+(t)-N_+(t_{i-1})) \} | \mathcal F(t_{i-1})] \\
&=& \mathbb E^{\mathbb P} \left[ \exp \left\{ \xi (N_+(t)-N_+(t_{i-1})) \right\} \frac{Z(t)}{Z(t_{i-1})} \bigg| \mathcal F(t_{i-1})\right]\\
&=& \mathbb E^{\mathbb P}[ \exp\{\xi (N_+(t)-N_+(t_{i-1}))\} Z_{+,i}(t)Z_{-,i}(t) | \mathcal F(t_{i-1})]\\
&=& \mathbb E^{\mathbb P}[ \exp\{\xi (N_+(t)-N_+(t_{i-1}))\} Z_{+,i}(t) | \mathcal F(t_{i-1})] \;\mathbb E^{\mathbb P}[ Z_{-,i}(t)| \mathcal F(t_{i-1}) ]\\
&=& \mathbb E^{\mathbb P} [ \exp\left\{\xi (N_+(t)-N_+(t_{i-1}))\right\} Z_{+,i}(t) | \mathcal F(t_{i-1})]\\
&=& \mathbb E^{\mathbb P} \bigg[ \exp\{\xi (N_+(t)-N_+(t_{i-1})) + (\lambda_+(t_{i-1}) - \widetilde\lambda_+ (t_{i-1}))(t-t_{i-1})\}  \\
& &\times\quad \exp\left\{ ( N_+(t) - N_+(t_{i-1})) \log \frac{\widetilde\lambda_{+}(t_{i-1}) }{\lambda_{+}(t_{i-1})} \right\} \bigg| \mathcal F(t_{i-1})\bigg] \\
&=& \exp \{ (\lambda_+ (t_{i-1}) - \widetilde\lambda_+ (t_{i-1}))(t-t_{i-1}) \} \\
& &\times\quad \mathbb E^{\mathbb P} \left[ \exp\left\{( N_+(t) - N_+(t_{i-1}))\left( \xi +  \log \frac{\widetilde\lambda_{+}(t_{i-1}) }{\lambda_{+}(t_{i-1})}\right) \right\} \bigg| \mathcal F(t_{i-1})\right]\\
&=& \exp \{ (\lambda_+ (t_{i-1}) - \widetilde\lambda_+ (t_{i-1}))(t-t_{i-1}) \} \\
& &\times \exp \left\{ \lambda_+(t_{i-1}) \left( \frac{\widetilde\lambda_{+}(t_{i-1})} {\lambda_{+}(t_{i-1}) } e^{\xi } - 1\right)(t-t_{i-1})  \right\} \\
&=& \exp \{ \widetilde\lambda_+(t_{i-1})(e^\xi  - 1)(t-t_{i-1}) \}
\end{eqnarray*}
where the last expression is the moment generating function of a Poisson distribution with intensity $\widetilde \lambda_+(t_{i-1})(t-t_{i-1})$.
For $N_-(t) - N_-(t_{i-1})$, the proof is identical.
\end{proof}

As in Definition~\ref{Def:components} we define $\widetilde{\gamma}(t_i)$ and $\widetilde{\varepsilon}(t_i)$ in terms of
$\widetilde\lambda_{+}$, $\widetilde\lambda_{-}$ and $\mathbb Q$.
\begin{eqnarray*}
\widetilde{\gamma}(t_i) &=& \{ (e^\delta - 1 -  \delta)  \widetilde{\lambda}_{+} (t_{i-1}) + (e^{-\delta} - 1 + \delta) \widetilde{\lambda}_{-} (t_{i-1}) \} \Delta t \\
\widetilde{\varepsilon}(t_i) &=& X(t_i) - \mathbb E^{\mathbb Q} [X(t_i) | \mathcal F(t_{i-1})].
\end{eqnarray*}
Then $$X(t_i) = r \Delta t - \widetilde{\gamma}(t_i) + \widetilde{\varepsilon}(t_i).$$

\begin{theorem}\label{Thm:martingale}
The discounted stock price process is a $\mathbb Q$-martingale, i.e., for $0\leq u \leq t \leq T$
$$\mathbb E^{\mathbb Q}[e^{-rt}S(t)|\mathcal F(u)] = e^{-ru} S(u).$$
\end{theorem}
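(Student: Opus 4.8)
The plan is to reduce, via the same telescoping tower-property argument used in the proof of Theorem~\ref{Thm:RadonNikodym}, to the single-interval statement
$$\mathbb E^{\mathbb Q}[e^{-rt}S(t) \mid \mathcal F(u)] = e^{-ru}S(u), \qquad t_{i-1} \leq u \leq t \leq t_i.$$
Granting this, for arbitrary $0 \le u \le t \le T$ one inserts the grid points lying in $(u,t]$ and conditions successively on $\mathcal F(t_{i-1}), \mathcal F(t_{i-2}), \dots$; each resulting conditional expectation is then an instance of the single-interval identity (with two consecutive grid points, or with $u$ and the first grid point, lying in a common interval), and the nested expectations collapse to $e^{-ru}S(u)$ exactly as in Theorem~\ref{Thm:RadonNikodym}. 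Thus it suffices to treat a single interval.

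On that interval I would factor the price ratio as
$$\frac{S(t)}{S(u)} = \exp\{\delta(N_+(t)-N_+(u))\}\,\exp\{-\delta(N_-(t)-N_-(u))\},$$
and pass to $\mathbb P$ through the change-of-measure identity $\mathbb E^{\mathbb Q}[\,\cdot \mid \mathcal F(u)] = Z(u)^{-1}\,\mathbb E^{\mathbb P}[\,\cdot\, Z(t) \mid \mathcal F(u)]$, which is valid because $Z$ is a $\mathbb P$-martingale by Theorem~\ref{Thm:RadonNikodym}. Using the relation $Z(t)/Z(t_{i-1}) = Z_{+,i}(t)Z_{-,i}(t)$ established there, together with the freezing of the intensities at $t_{i-1}$ on the interval (Assumption~\ref{Assumption}(iv)), the density ratio $Z(t)/Z(u)$ factors as $\{Z_{+,i}(t)/Z_{+,i}(u)\}\{Z_{-,i}(t)/Z_{-,i}(u)\}$, with each factor depending only on the matching increment $N_{\pm}(t)-N_{\pm}(u)$. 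Invoking the $\mathbb P$-conditional independence of these increments (Assumption~\ref{Assumption}(iii)) then separates the expectation into a product of two single-sided expectations, each of exactly the form evaluated in the proof of Lemma~\ref{Lemma:Change_of_intensities}.

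Evaluating those two factors by the Poisson moment generating function, with the weights $e^{\pm\delta}$ absorbed into the exponent precisely as $e^{\xi}$ was in the Lemma, yields
$$\mathbb E^{\mathbb Q}\!\left[\frac{S(t)}{S(u)}\,\Big|\,\mathcal F(u)\right] = \exp\{(\widetilde\lambda_+(t_{i-1})(e^{\delta}-1) + \widetilde\lambda_-(t_{i-1})(e^{-\delta}-1))(t-u)\}.$$
At this point the defining constraint \eqref{Eq:risk-free-rate}, namely $(e^{\delta}-1)\widetilde\lambda_+ + (e^{-\delta}-1)\widetilde\lambda_- = r$, collapses the exponent to $r(t-u)$, so that $\mathbb E^{\mathbb Q}[e^{-rt}S(t)\mid\mathcal F(u)] = e^{-rt}S(u)\,e^{r(t-u)} = e^{-ru}S(u)$, as required.

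The step I expect to be the main obstacle is the factorization in the second paragraph: one must be careful that the operative independence is the $\mathbb P$-conditional independence of Assumption~\ref{Assumption}(iii), used in concert with the product structure of $Z(t)/Z(u)$, rather than any (unestablished) independence under $\mathbb Q$. The freezing of the intensity on a sub-interval starting at a non-grid point $u$ also relies on the step-process property~\ref{Assumption}(iv) to ensure that $N_{\pm}(t)-N_{\pm}(u)\mid\mathcal F(u)$ remains Poisson with rate $\lambda_{\pm}(t_{i-1})(t-u)$; once these points are handled, the remainder is the routine Poisson generating-function computation borrowed from Lemma~\ref{Lemma:Change_of_intensities}.
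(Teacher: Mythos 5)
Your proposal is correct, and its skeleton matches the paper's: reduce via the tower property to a single interval $t_{i-1}\leq u < t \leq t_i$, express $S(t)/S(u)$ through the increments $N_{\pm}(t)-N_{\pm}(u)$, evaluate two Poisson moment generating functions at $\pm\delta$, and collapse the exponent $\{(e^{\delta}-1)\widetilde\lambda_{+}(t_{i-1}) + (e^{-\delta}-1)\widetilde\lambda_{-}(t_{i-1})\}(t-u)$ to $r(t-u)$ via Eq.~\eqref{Eq:risk-free-rate}. Where you diverge is in the middle step: the paper stays under $\Q$, quoting Lemma~\ref{Lemma:Change_of_intensities} for the new intensities and then factoring the $\Q$-conditional expectation of the product $\exp\{\delta(N_+(t)-N_+(u))\}\exp\{-\delta(N_-(t)-N_-(u))\}$ into a product of two $\Q$-expectations, whereas you pass back to $\mathbb P$ via $\E^{\Q}[\,\cdot\,|\F(u)] = Z(u)^{-1}\E^{\mathbb P}[\,\cdot\,Z(t)|\F(u)]$, factor $Z(t)/Z(u)$ into its $+$ and $-$ parts, and invoke only the $\mathbb P$-conditional independence of Assumption~\ref{Assumption}(iii), effectively re-running the computation of Lemma~\ref{Lemma:Change_of_intensities} with $\xi=\pm\delta$. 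The paper's route is shorter because the MGF computation is done once in the lemma and merely cited; your route is longer but arguably more careful, since the paper's factorization under $\Q$ tacitly requires that the two increments remain conditionally independent under the new measure, a joint statement that Lemma~\ref{Lemma:Change_of_intensities} (which establishes only the marginal moment generating functions) does not literally deliver, while your version derives the product structure from the assumed $\mathbb P$-independence and the multiplicative form of the density, closing that small gap. Both arguments share the same implicit extension of Assumption~\ref{Assumption}(ii)--(iii) from conditioning at the grid point $t_{i-1}$ to conditioning at an interior time $u$ (with intensity frozen by the step-process property (iv)), which you correctly flag; the paper uses it silently in the same way.
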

\begin{proof}
By the tower property it suffices to consider the case that $t_{i-1}\leq u <t\leq t_i$.
Since $\{N_{\pm}(t) - N_{\pm}(u)\}| \mathcal F(u) $ have intensities $\widetilde \lambda_{\pm}(t_{i-1})(t-u)$ under $\mathbb Q$, we have
\begin{eqnarray*}
& &\mathbb E^{\mathbb Q}[S(t) | \mathcal F(u)]\\
&=& \mathbb E^{\mathbb Q} [S(u)\exp \{ \delta( N_{+}(t) - N_{+}(u) - N_{-}(t) + N_{-}(u)) \} | \mathcal F(u) ] \\
&=& S(u) \mathbb E^{\mathbb Q} [ \exp \{ \delta( N_{+}(t) - N_{+}(u)) \} | \mathcal F(u) ] \,
\mathbb E^{\mathbb Q} [ \exp \{ -\delta( N_{-}(t) - N_{-}(u)) \} | \mathcal F(u) ] \\
&=& S(u) \exp \{ \widetilde \lambda_{+}(t_{i-1}) (t-u) ( e^\delta -1) \} \exp \{ \widetilde \lambda_{-}(t_{i-1}) (t-u) ( e^{-\delta} -1) \} \\
&=& S(u) e^{r(t-u)}.
\end{eqnarray*}
\end{proof}

Basically the GARCH intensity option pricing model has flexibility in changing the volatility according to the measure change.
This allows us to construct a GARCH intensity option pricing model consistent with volatility spread \cite{bakshi2006theory}.

If we consider a special case when
 $\widetilde\lambda_{+} $ and $\widetilde\lambda_{-} $ satisfy an additional condition
\begin{eqnarray}\label{Eq:variance-preserving}
\widetilde\lambda_{+}(t) + \widetilde\lambda_{-}(t) = \lambda_{+}(t) + \lambda_{-}(t).
\end{eqnarray}
Since  $D(t)=0$, we have
\begin{eqnarray*}
Z(T)&=& \exp  \bigg( \sum_{i=1}^{N} \bigg\{ \left(N_+(t_{i}) - N_+(t_{i-1})\right)\log \frac{\widetilde\lambda_+(t_{i-1})}{\lambda_+(t_{i-1})}\\
&+& \left(N_-(t_{i}) - N_-(t_{i-1})\right)\log \frac{\widetilde\lambda_-(t_{i-1})}{\lambda_-(t_{i-1})}\bigg\} \bigg) .
\end{eqnarray*}
Since $\delta^{2}(\widetilde\lambda_{+}(t_{i-1}) + \widetilde\lambda_{-}(t_{i-1})) \Delta t$ is the conditional variance of $X(t_i)$ under $\mathbb Q$
given  $\mathcal F(t_{i-1})$, we have
$$\Var^{\mathbb Q}(X(t_i) | \mathcal F(t_{i-1}))= \Var^{\mathbb P}(X(t_i) | \mathcal F(t_{i-1}))$$
for  $1\leq i \leq N$.
We plot implied volatility smile in Figure~\ref{Fig:Implied_Vol}.
Using the parameter setting in Table~\ref{Table:estimation2} in Monte Carlo method,
we generate $5\times 10^6$ sample price paths.
The implied volatility is obtained using Black-Scholes-Merton model.
The left panel is for the implied volatility with 30 days maturity
and the right panel is for 60 days maturity.

\begin{figure}
\begin{center}
\includegraphics[width=5cm]{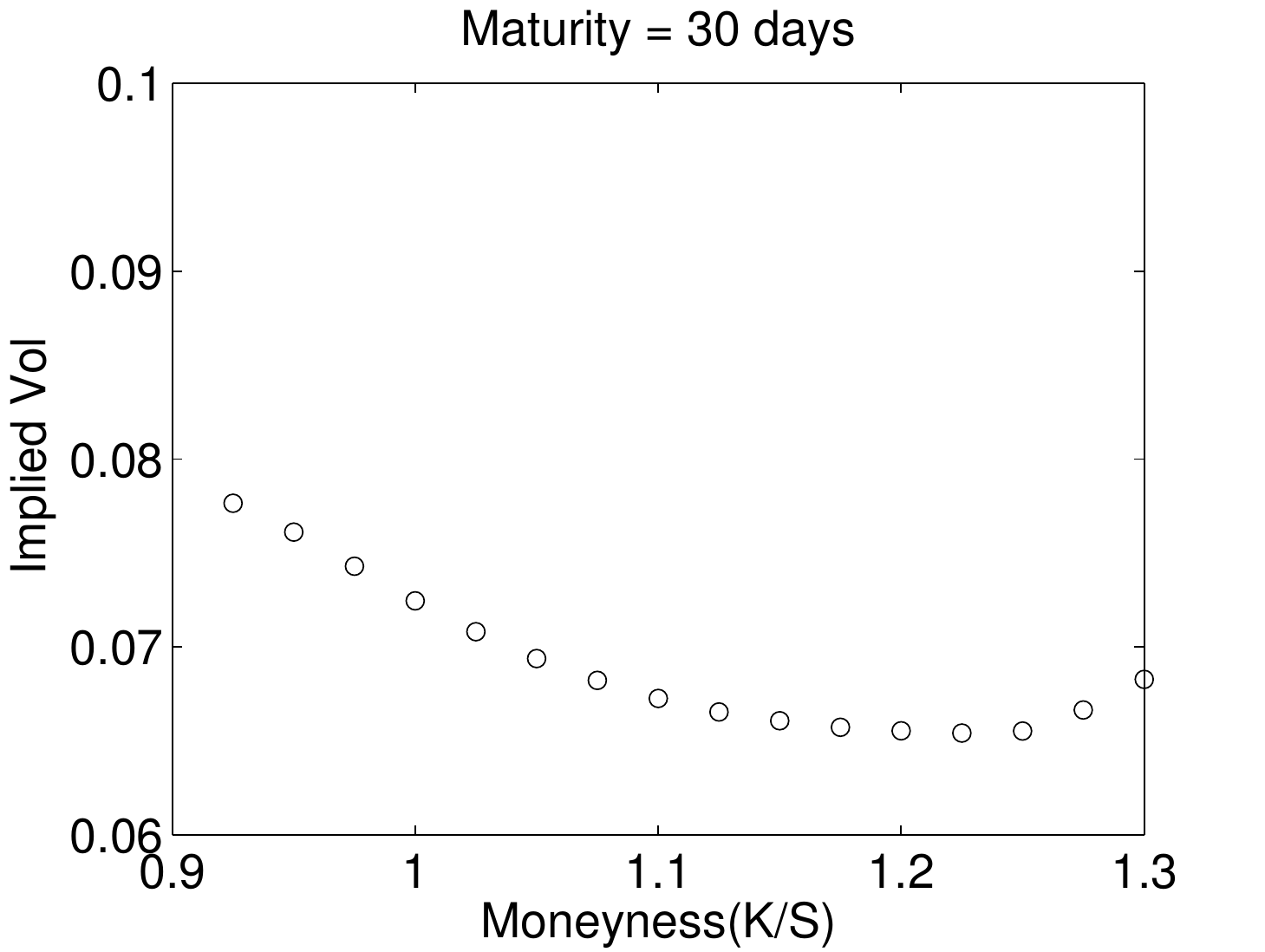}
\quad
\includegraphics[width=5cm]{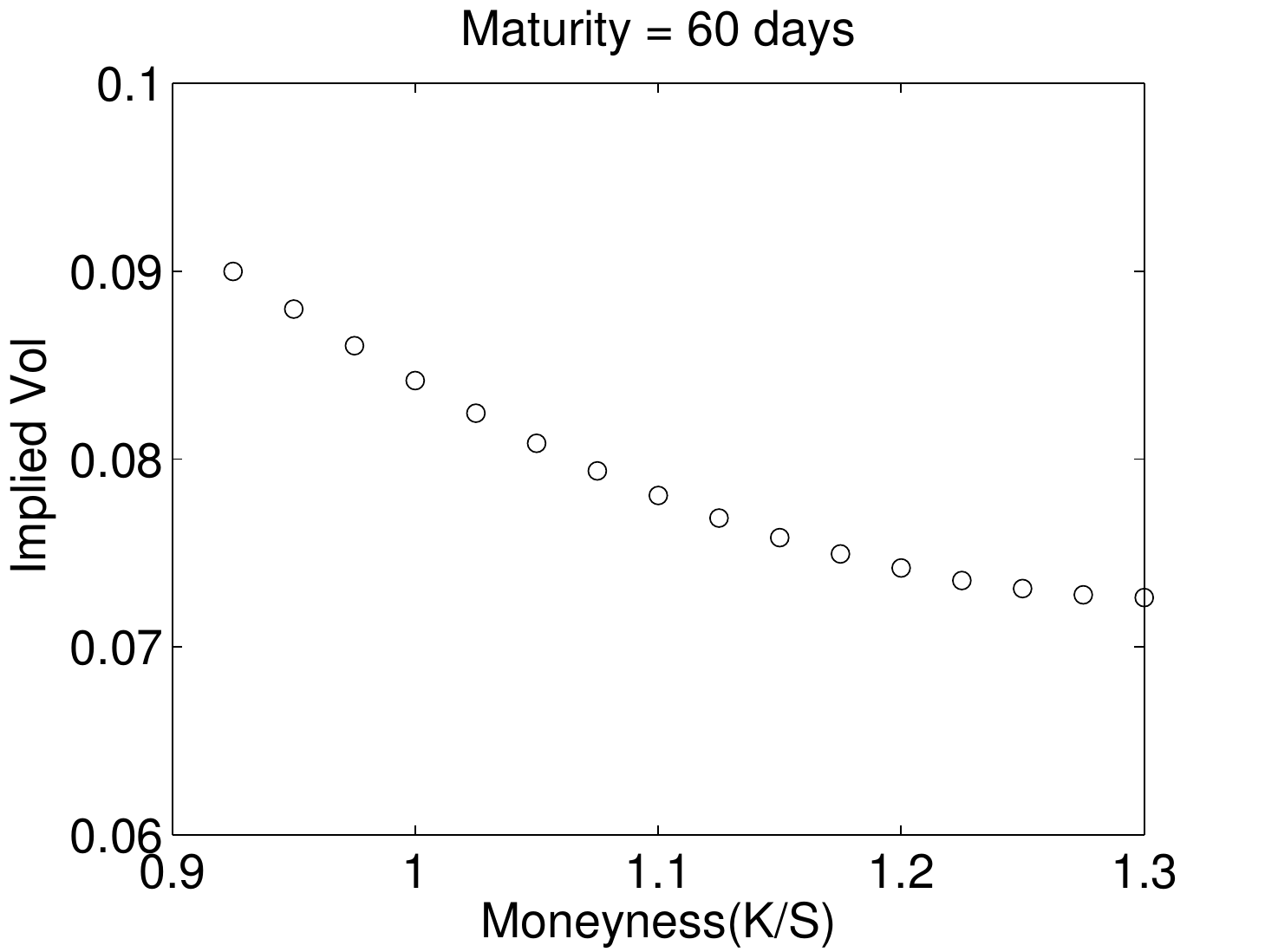}
\end{center}
\caption{GJR: Volatility smile: maturity 30 days and 60 days (from left to right)}
\label{Fig:Implied_Vol}
\end{figure}

\begin{table}
\caption{Parameter setting for GJR intensity model}\label{Table:estimation2}
$$
\begin{array}{cc}
\hline
\delta & 2.0\times 10^{-3} \\
\hline
\omega_{+} & 8.50\times 10^{-2} \\
\beta_{+}  & 9.39\times 10^{-1}  \\
\alpha_{+} & 9.79\times 10^{2} \\
\gamma_{+} & 1.09\times 10^{4}  \\	
\omega_{-} & 7.28\times 10^{-2} \\
\beta_{-}  & 9.42\times 10^{-1} \\
\alpha_{-} & 8.49\times 10^{2} \\
\gamma_{-} & 1.07\times 10^{4} \\	
\hline
\end{array}
$$
\end{table}

\begin{remark}
Assume that $\lambda_{\pm}(t) = \lambda_{\pm}$ and $\widetilde \lambda_{\pm}(t) = \widetilde \lambda_{\pm}$
for some constants $\lambda_{\pm}$ and $\widetilde \lambda_{\pm}$ for every $t$,
and assume that the European call option price at time $t$ is given by
$$ c(t, S(t)) = \mathbb E^{\mathbb Q} [ e^{-r(T-t)} (S(T) - K)^+ | \mathcal F(t)]$$
where $\mathbb Q$ is the conditional variance preserving measure.
By I\^{t}o's formula, we have
\begin{eqnarray*}
e^{-r t} c(t,S(t))
&=&  c(0,S(0)) + A(t)  \\
&+& \int_{0}^{t}  \left[ c(u, e^\delta S(u)) -  c(u, S(u))  \right] \D (N_{+}(u) - \widetilde \lambda_{+}u)\\
&+& \int_{0}^{t}  \left[ c(u, e^{-\delta} S(u)) -  c(u, S(u)) \right] \D (N_{-}(u) - \widetilde \lambda_{-}u)
\end{eqnarray*}
where
\begin{eqnarray*}
A(t) &=& \int_{0}^{t} e^{-ru} \big( -r c(u, S(u)) + \frac{\partial c}{\partial t}(u, S(u))
+ \widetilde \lambda_{+} \left[ c(u, e^\delta S(u)) -  c(u, S(u)) \right] \\
&+& \widetilde \lambda_{-}  \left[ c(u, e^{-\delta} S(u)) -  c(u, S(u)) \right] \big) \D u .
\end{eqnarray*}
Note that $e^{-rt}  c(t, S(t))$,
$N_{+}(t) - \widetilde \lambda_{+}t$  and $N_{-}(t) - \widetilde \lambda_{-}t$ are $\mathbb Q$-martingales.
Hence
$A(t)$ is a martingale, and the integrand should be zero.
Thus $c$ satisfies
$$
 - r c(t, S) + \frac{\partial c}{\partial t} (t,S) + \widetilde\lambda_+\left[c(t, e^\delta S) - c(t, S)\right]
+ \widetilde\lambda_-\left[c(t, e^{-\delta} S)-c(t, S) \right] = 0.
$$
For small $\delta$, consider the following approximations:
$$
c(t, e^{\pm\delta} S) - c(t, S) \approx  (e^{\pm\delta}- 1)S \frac{\partial c}{\partial S}(t, S) +
\frac{1}{2}(e^{\pm\delta} - 1)^2 S^2 \frac{\partial^2 c}{\partial S^2}(t, S).
$$
Then
\begin{eqnarray*}
- r c(t, S) + \frac{\partial c}{\partial t} (t,S) &+& \widetilde\lambda_+\left[(e^\delta - 1)S \frac{\partial c}{\partial S}(t, S) + \frac{1}{2}(e^{\delta} - 1)^2 S^2 \frac{\partial^2 c}{\partial S^2}(t, S)\right] \\
&+& \widetilde\lambda_-\left[(e^{-\delta} - 1)S \frac{\partial c}{\partial S}(t, S) + \frac{1}{2}(e^{-\delta} - 1)^2 S^2 \frac{\partial^2 c}{\partial S^2}(t, S) \right] \\
\approx - r c(t, S) + \frac{\partial c}{\partial t} (t,S) &+& \widetilde\lambda_+\left[(e^\delta - 1)S \frac{\partial c}{\partial S}(t, S) + \frac{1}{2}\delta^2 S^2 \frac{\partial^2 c}{\partial S^2}(t, S)\right] \\
&+& \widetilde\lambda_-\left[(e^{-\delta} - 1)S \frac{\partial c}{\partial S}(t, S) + \frac{1}{2}\delta^2 S^2 \frac{\partial^2 c}{\partial S^2}(t, S) \right] \approx 0
\end{eqnarray*}
and by Eqs.~\eqref{Eq:risk-free-rate} and \eqref{Eq:variance-preserving},
we obtain Black-Scholes-Merton PDE
$$- r c(t, S) + \frac{\partial c}{\partial t} (t,S) + r S \frac{\partial c}{\partial S}(t, S) + \frac{1}{2} h S^2 \frac{\partial^2 c}{\partial S^2}(t, S) = 0$$
where $ h = \delta^2(\lambda_{+} + \lambda_{-})$.
\end{remark}

\section{A generalization of GARCH intensity model}\label{Sect:Generalization}
The goal of this section is to provide the generalized version for the previous model in which
it is assumed to be that the sizes of stock price changes affected by news are represented by independent and identically distributed random variables.

\begin{assumption}\label{Assumption2}
The assumptions (i)--(v) for asset price process are the same as in Assumption~\ref{Assumption}(i)--(v) and we need an additional condition:\\
(vi) (Asset price) The asset price $S(t)$ satisfies
$$ S(t) = S(0) \exp\left(\sum_{j=1}^{N_+(t)}\delta_{+,j} - \sum_{j=1}^{N_-(t)}\delta_{-,j}\right) $$
for some i.i.d. random variables $\delta_{+,j}>0$ and $\delta_{-,j}>0$, $j\geq 1$.
\end{assumption}

\begin{lemma}\label{Lemma5}
Under Assumption~\ref{Assumption2}, we have
$$\mathbb E [X(t_i) | \mathcal F(t_{i-1})]  =  (\bar \delta_+ \lambda_{+}(t_{i-1}) - \bar \delta_- \lambda_{-}(t_{i-1})) \Delta t$$
and
$$ \Var(X(t_i) |\mathcal F(t_{i-1}) )= (\overline{\delta_+^2} \lambda_+(t_{i-1}) + \overline{\delta_-^2} \lambda_-(t_{i-1}))\Delta t$$
where
$$\bar\delta_\pm = \E [\delta_\pm]$$
and
$$\overline{\delta_\pm^2} = \E [\delta_\pm^2].$$
\end{lemma}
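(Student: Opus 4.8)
The plan is to recognize $X(t_i)$ as the difference of two conditionally compound Poisson sums over the interval $[t_{i-1},t_i]$ and to compute its first two conditional moments directly. From Assumption~\ref{Assumption2}(vi),
$$X(t_i) = \log\frac{S(t_i)}{S(t_{i-1})} = \sum_{j=N_+(t_{i-1})+1}^{N_+(t_i)}\delta_{+,j} \; - \sum_{j=N_-(t_{i-1})+1}^{N_-(t_i)}\delta_{-,j},$$
where, conditionally on $\mathcal F(t_{i-1})$, each increment $\Delta N_\pm := N_\pm(t_i) - N_\pm(t_{i-1})$ is Poisson with mean $\lambda_\pm(t_{i-1})\Delta t$ by Assumption~\ref{Assumption2}(ii), the jump sizes $\delta_{\pm,j}$ are i.i.d. and (by the model specification) independent of the counting increments and of $\mathcal F(t_{i-1})$, and the $+$ and $-$ families are conditionally independent by Assumption~\ref{Assumption2}(iii).

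First I would compute the conditional mean. Conditioning additionally on $\Delta N_+$ and using Wald's identity together with the independence of the jump sizes from the count gives $\E[\sum_{j}\delta_{+,j}\mid \mathcal F(t_{i-1}),\Delta N_+] = \Delta N_+\,\bar\delta_+$; a further conditional expectation with $\E[\Delta N_+\mid\mathcal F(t_{i-1})] = \lambda_+(t_{i-1})\Delta t$ yields the positive contribution $\bar\delta_+\lambda_+(t_{i-1})\Delta t$. The negative part is handled identically, and subtracting gives the first claimed formula.

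For the variance I would first invoke the conditional independence of the $+$ and $-$ increments to kill the cross covariance, so that $\Var(X(t_i)\mid\mathcal F(t_{i-1}))$ splits as the sum of the conditional variances of the two compound sums. For each sum I would apply the law of total variance conditionally on its Poisson count, $\Var(\sum_j\delta_{+,j}\mid\mathcal F(t_{i-1})) = \E[\Delta N_+\mid\mathcal F(t_{i-1})]\,\Var(\delta_+) + \Var(\Delta N_+\mid\mathcal F(t_{i-1}))\,(\bar\delta_+)^2$. The single place where the Poisson structure is essential — and the only real subtlety of the argument — is that a Poisson variable has equal mean and variance, both $\lambda_+(t_{i-1})\Delta t$, which collapses the two terms into $\lambda_+(t_{i-1})\Delta t\,(\Var(\delta_+) + (\bar\delta_+)^2) = \lambda_+(t_{i-1})\Delta t\,\overline{\delta_+^2}$. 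Adding the analogous negative contribution $\lambda_-(t_{i-1})\Delta t\,\overline{\delta_-^2}$ gives the second formula. An equivalent and slightly cleaner route would be to compute the conditional moment generating function, which factorizes over the two independent compound Poisson pieces into $\exp\{\lambda_+(t_{i-1})\Delta t(\phi_+(\theta)-1) + \lambda_-(t_{i-1})\Delta t(\phi_-(-\theta)-1)\}$ with $\phi_\pm$ the MGF of $\delta_\pm$, and then read off the mean and variance from its first two derivatives at $\theta = 0$; either way the computation reduces to the same elementary compound Poisson moment identities, so no genuine obstacle arises beyond bookkeeping.
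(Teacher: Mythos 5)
Your proof is correct and follows essentially the same route as the paper: the paper's own proof simply writes $X(t_i)$ as the difference of the two conditionally independent compound Poisson sums and appeals to standard compound Poisson moment facts, which are exactly the Wald/total-variance computations you carry out explicitly. Your version just fills in the details (including the Poisson mean-equals-variance collapse to $\overline{\delta_\pm^2}$) that the paper leaves to the reader.
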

\begin{proof}
Recall that
$$X(t_i) =  \sum_{j=N_+(t_{i-1})+1}^{N_+(t_i)}\delta_{+,j} - \sum_{j=N_-(t_{i-1})+1}^{N_-(t_i)}\delta_{-,j} .$$
Now use the fact that each conditional distribution
$$\sum_{j=N_\pm(t_{i-1})+1)}^{N_\pm(t_i)}\delta_{\pm,j}\big| \F(t_{i-1}) $$
is a compound Poisson distribution.
\end{proof}

Note that depending on distributions of $\delta_+$ and $\delta_-$, the model allows the skewness in the conditional distribution of log-return $X(t)$.
For example, if the tail of the distribution of $\delta_-$ is fatter than the tail of the distribution of $\delta_+$,
then the conditional distribution of log-return $X(t)$ is negatively skewed.

\begin{definition}
We define
\begin{eqnarray*}
\mu(t_i) &=& (\phi_{\delta_+} - 1) \lambda_+ (t_{i-1})\Delta t + (\phi_{\delta_-} - 1) \lambda_- (t_{i-1})\Delta t \\
\gamma(t_i) &=& (\phi_{\delta_+} - 1 - \bar \delta_+) \lambda_+ (t_{i-1})\Delta t + (\phi_{\delta_-} - 1 - \bar \delta_-) \lambda_- (t_{i-1})\Delta t \\
\varepsilon(t_i) &=& X(t_i) - \mathbb E [X(t_i) | \mathcal F(t_{i-1})]
\end{eqnarray*}
where
$$\phi_{\delta_{+}} = \E [e^{\delta_{+}}], \textrm{ and } \phi_{\delta_{-}} = \E [e^{-\delta_{-}}]$$
respectively.
\end{definition}
Note that, by direct computation, we have
\begin{equation}\label{Eq:components}
X(t_i) = \mu(t_i) - \gamma(t_i) + \varepsilon(t_i).
\end{equation}

The mean correction factor $\gamma(t_i)$ appears because we model with log-return and this implies that
\begin{eqnarray*}
& & \E[\exp (\varepsilon(t_i)) | \F(t_{i-1})]\\
&=& \E [ \exp \left\{ -\E[X(t_i)|\mathcal F(t_{i-1}) ] + X(t_i) \right\} | \F(t_{i-1})] \\
&=& \exp\{ -\E[X(t_i)|\mathcal F(t_{i-1})]\}\\
& &\times \quad \E\left[ \exp \left( \sum_{j=N_+(t_{i-1})+1}^{N_+(t_{i})} \delta_{+,j} \right) \bigg| \F(t_{i-1}) \right] \\
& &\times \quad \E \left[ \exp \left( \sum_{j=N_-(t_{i-1})+1}^{N_-(t_{i})} \delta_{-,j} \right) \bigg| \F(t_{i-1}) \right]
\end{eqnarray*}
and using the property of compound Poisson distributions, we have
\begin{eqnarray*}
& & \E[\exp (\varepsilon(t_i)) | \F(t_{i-1})] \\
&=& \exp\{(\phi_{\delta_+} - 1 - \bar \delta_+) \lambda_+ (t_{i-1})\Delta t + (\phi_{\delta_-} - 1 - \bar \delta_-) \lambda_- (t_{i-1})\Delta t\}\\
&=& \exp(\gamma(t_i)).
\end{eqnarray*}
The one step ahead expectation of future stock price can be represented as exponential of drift term $\mu(t_i)$ multiplied by current stock price,
that is
\begin{eqnarray*}
& &\E[S(t_i) | \F(t_{i-1})] \\
&=&S(t_{i-1})\exp(\mu(t_i) - \gamma(t_i))\E[\exp(\varepsilon(t_i)) | \F(t_{i-1})]\\
&=&S(t_{i-1})\exp(\mu(t_i)).
\end{eqnarray*}

The derivation of equivalent martingale measure for extended version of GRACH intensity model is similar to the previous version.

\begin{definition}\label{Def:RadonNikodym2}
Let $f_\pm$ be probability density functions of $\delta_\pm$, respectively
and $\widetilde f_\pm$ are some probability density functions
(which are desired probability density functions of $\delta_\pm$ under equivalent martingale measure). \\
(i) Let
\begin{eqnarray*}
& &\widetilde\phi_{\delta_{+}} = \E^{\Q}[e^{\delta_{+}}] = \int_{-\infty}^{\infty} e^x \widetilde f_{+}(x) \D x,\\
& &\widetilde\phi_{\delta_{-}} = \E^{\Q}[e^{-\delta_{-}}] = \int_{-\infty}^{\infty} e^{-x} \widetilde f_{-}(x) \D x
\end{eqnarray*}
and
$\widetilde\lambda_{+} $ and $\widetilde\lambda_{-} $ be two r.c.l.l. adapted step processes satisfying the equation
\begin{equation}\label{Eq:riskEquation}
(\widetilde\phi_{\delta_+} -1)\widetilde\lambda_{+}(t_{i}) + (\widetilde\phi_{\delta_-} -1)\widetilde\lambda_{-}(t_{i}) = r\Delta t
\end{equation}
for each $i$, and $$\widetilde\lambda_{\pm}(t) = \widetilde\lambda_{\pm}(t_{i-1})$$ for $t_{i-1} \leq t < t_{i}$.
\\
(ii) Suppose that $\widetilde\lambda_{+}(t) $ and $\widetilde\lambda_{-}(t) $ are positive processes.
Let
$$\kappa_{i-1} = \widetilde\lambda_+(t_{i-1}) - \lambda_+(t_{i-1}) + \widetilde \lambda_-(t_{i-1}) - \lambda_-(t_{i-1}), $$
$$ Q_{i-1}(t) = \sum_{j=N_+(t_{i-1})+1}^{N_+(t)} \log \frac{\lambda_{+}(t_{i-1}) f(\delta_{+,j})}{\widetilde\lambda_{+}(t_{i-1}) \widetilde f(\delta_{+,j})}
+ \sum_{j=N_-(t_{i-1})+1}^{N_-(t)} \log \frac{\lambda_{-}(t_{i-1}) f(\delta_{-,j})}{\widetilde\lambda_{-}(t_{i-1}) \widetilde f(\delta_{-,j})},
$$
for integer $0 \leq i \leq N$.
Define
$$Z(0) =1$$
and
$$Z(t) = Z(t_{i-1}) \exp ( - \kappa_{i-1} (t-t_{i-1}) -  Q_{i-1}(t))$$
for $t_{i-1} < t \leq t_{i}$, recursively.
\end{definition}

\begin{remark}
Note that $\mathcal F(t_{i-1})$-measurable random variable $\kappa_{i-1}$ is zero
when the conditional variance of return distribution of the martingale measure is equal to the variance under physical measure.
$Q_i(t)$ depends on $\mathcal F(t)$ random variables $N_{\pm}$ and $\delta$.
\end{remark}

\begin{lemma}\label{Lemma:RadonNikodym2}
For $0 \leq t \leq T$,  $$\mathbb E[Z(t)]=1.$$
\end{lemma}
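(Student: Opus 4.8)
The plan is to mimic the structure of the proof of Theorem~\ref{Thm:RadonNikodym}: first establish the one-step identity $\E[Z(t)/Z(t_{i-1}) \mid \F(t_{i-1})] = 1$ for $t_{i-1} < t \le t_i$ (all expectations being under $\mathbb P$), and then obtain $\E[Z(t)] = 1$ by the tower property. Since $Z(t_{i-1})$ is $\F(t_{i-1})$-measurable, the one-step identity gives $\E[Z(t)\mid\F(t_{i-1})] = Z(t_{i-1})$. Taking $t = t_i$ and iterating yields $\E[Z(t_i)] = \E[Z(t_{i-1})] = \dots = \E[Z(0)] = 1$ at every grid point, and for intermediate $t\in(t_{i-1},t_i]$ the same identity gives $\E[Z(t)] = \E[\E[Z(t)\mid\F(t_{i-1})]] = \E[Z(t_{i-1})] = 1$, closing the argument.

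For the one-step identity I would write
$$\frac{Z(t)}{Z(t_{i-1})} = \exp\{-\kappa_{i-1}(t-t_{i-1})\}\, \exp\{-Q_{i-1}(t)\},$$
and note that $\kappa_{i-1}$ is $\F(t_{i-1})$-measurable, so it factors out of the conditional expectation. Rewriting the remaining exponential as a product of likelihood ratios,
$$\exp\{-Q_{i-1}(t)\} = \prod_{j=N_+(t_{i-1})+1}^{N_+(t)} \frac{\widetilde\lambda_+(t_{i-1})\widetilde f_+(\delta_{+,j})}{\lambda_+(t_{i-1}) f_+(\delta_{+,j})} \prod_{j=N_-(t_{i-1})+1}^{N_-(t)} \frac{\widetilde\lambda_-(t_{i-1})\widetilde f_-(\delta_{-,j})}{\lambda_-(t_{i-1}) f_-(\delta_{-,j})}.$$
The $+$ and $-$ factors are conditionally independent given $\F(t_{i-1})$ by Assumption~\ref{Assumption2}(iii), so the conditional expectation factors into a product of the two corresponding expectations.

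The core computation is each factor. For the $+$ factor I would condition first on the count $k = N_+(t)-N_+(t_{i-1})$, which is Poisson with mean $\lambda_+(t_{i-1})(t-t_{i-1})$ and, given the count, the new jump sizes are i.i.d.\ with density $f_+$ and independent of $\F(t_{i-1})$. The key cancellation is
$$\E\left[\frac{\widetilde f_+(\delta_{+,j})}{f_+(\delta_{+,j})}\right] = \int \frac{\widetilde f_+(x)}{f_+(x)} f_+(x)\,\D x = \int \widetilde f_+(x)\,\D x = 1,$$
because $\widetilde f_+$ is a probability density. Hence, given $k$, the conditional expectation of the $+$ product equals $(\widetilde\lambda_+(t_{i-1})/\lambda_+(t_{i-1}))^k$, and summing against the Poisson law gives
$$\e^{-\lambda_+(t_{i-1})(t-t_{i-1})}\sum_{k\ge0}\frac{(\widetilde\lambda_+(t_{i-1})(t-t_{i-1}))^k}{k!} = \exp\{(\widetilde\lambda_+(t_{i-1})-\lambda_+(t_{i-1}))(t-t_{i-1})\}.$$
The $-$ factor is identical with subscripts swapped, and the sum of the two exponents equals exactly $\kappa_{i-1}(t-t_{i-1})$, which cancels the prefactor $\exp\{-\kappa_{i-1}(t-t_{i-1})\}$ and leaves $1$.

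The hard part will be the bookkeeping of the compound-Poisson structure rather than any deep estimate: I must justify that, conditionally on $\F(t_{i-1})$, the jumps arriving in $(t_{i-1},t]$ are i.i.d.\ $f_\pm$ and independent of both the count and the past, so that the two-layer conditioning (first on the Poisson count, then over the jump sizes) is legitimate; and I must assume $f_\pm>0$ wherever $\widetilde f_\pm>0$ so that the likelihood ratio $\widetilde f_\pm/f_\pm$ is well defined and integrable, which is what makes $\E[\widetilde f_\pm/f_\pm]=1$ hold. Beyond these measurability and integrability points, the remaining steps are exactly the Poisson generating-function summation already used in Theorem~\ref{Thm:RadonNikodym} and Lemma~\ref{Lemma:Change_of_intensities}.
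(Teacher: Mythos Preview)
Your proposal is correct and follows essentially the same route as the paper's proof. The paper also factors $Z(t)/Z(t_{i-1})$ into a $+$ piece and a $-$ piece, uses conditional independence to split the conditional expectation, shows each piece has conditional expectation $1$, and then telescopes via the tower property; the only difference is that the paper simply asserts $\E[Z_{\pm,i}(t)\mid\F(t_{i-1})]=1$, whereas you spell out the compound-Poisson computation (condition on the count, use $\E[\widetilde f_\pm/f_\pm]=1$, sum the Poisson series) that justifies it.
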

\begin{proof}
For $t_{i-1} < t \leq t_i$, define
$$
Z_{+,i}(t) \\
= \exp\left\{ (\lambda_{+}(t_{i-1}) - \widetilde\lambda_{+}(t_{i-1}))(t-t_{i-1}) +
\sum_{j=N_+(t_{i-1})+1}^{N_+(t)} \log \frac{\widetilde\lambda_{+}(t_{i-1}) \widetilde f(\delta_{+,j})}{\lambda_{+}(t_{i-1}) f(\delta_{+,j})} \right\}
$$
$$
Z_{-,i}(t) \\
= \exp\left\{ (\lambda_{-}(t_{i-1}) - \widetilde\lambda_{-}(t_{i-1}))(t-t_{i-1}) +
\sum_{j=N_-(t_{i-1})+1}^{N_-(t)} \log \frac{\widetilde\lambda_{-}(t_{i-1}) \widetilde f(\delta_{-,j})}{\lambda_{-}(t_{i-1}) f(\delta_{-,j})} \right\}.
$$
Then $Z_{+,i}(t)$ and $Z_{-,i}(t)$ are $\mathcal F(t)$-measurable and
satisfy
$$\mathbb E[Z_{+,i}(t)| \F(t_{i-1})]=1$$
and
$$\mathbb E[Z_{-,i}(t)|F(t_{i-1})]=1.$$
Note that
\begin{eqnarray*}
& & Z_{+,i}(t)Z_{-,i}(t) = \frac{Z(t)}{Z(t_{i-1})}.
\end{eqnarray*}
Since $N_{+}(t) - N_{+}(t_{i-1})$ and $N_{-}(t) - N_{-}(t_{i-1})$ are conditionally independent upon $\mathcal F(t_{i-1})$, we have
$$
\mathbb E\left[\frac{Z(t)}{Z(t_{i-1})} \bigg| \mathcal F(t_{i-1})\right]
=\mathbb E[Z_{+,i}(t) | \mathcal F(t_{i-1}] \; \mathbb E[Z_{-,i}(t) | \mathcal F_{i-1}] =1.
$$
Finally, for $0 < t \leq T$, we have
\begin{eqnarray*}
\mathbb E[Z(t)] &=& \mathbb E\left[\frac{Z(t)}{Z(t_{i-1})}\frac{Z(t_{i-1})}{Z(t_{i-2})} \dots \frac{Z(t_2)}{Z(t_1)} \frac{Z(t_1)}{Z(0)}\right] \\
&=& \mathbb E \left[\mathbb E\left[\frac{Z(t)}{Z(t_{i-1})} \frac{Z(t_{i-1})}{Z(t_{i-2})} \dots \frac{Z(t_2)}{Z(t_1)} \frac{Z(t_1)}{Z(0)} \bigg| \mathcal F(t_{i-1})\right]\right] \\
&=& \mathbb E\left[\frac{Z(t_{i-1})}{Z(t_{t-2})} \dots \frac{Z(t_2)}{Z(t_1)} \frac{Z(t_1)}{Z(0)} \right] \\
&=& \quad \vdots \\
&=& \mathbb E \left[ \frac{Z(t_1)}{Z(0)} \right] = 1.
\end{eqnarray*}
\end{proof}

Since $\mathbb E[Z(T)] =1$, we use $Z(T)$ in Definition~\ref{Def:RadonNikodym2} to construct a new probability measure $\mathbb Q$.
We define
\begin{equation}\label{Eq:measure2}
\mathbb Q(A) = \int_{A} Z(T) \D \mathbb P \quad \textrm{for }A \in \mathcal F.
\end{equation}

\begin{lemma}\label{Lem:newIntensity2}
Under the measure $\mathbb Q$ in \eqref{Eq:measure2}, for every $t_{i-1}<t\leq t_i$,
the conditional distributions
$$\left(N_{+}(t)-N_+(t_{i-1})\right)|\F(t_{i-1})$$
and $$\left(N_{-}(t)-N_{-}(t_{i-1}) \right)|\F(t_{i-1})$$ are Poisson distributions
with  new intensities $\widetilde \lambda_{+}(t_{i-1})$ and $\widetilde \lambda_{-}(t_{i-1})$, respectively.
Moreover, $\delta_{+,j}$ have probability density function $\widetilde f$ under $\Q$.
\end{lemma}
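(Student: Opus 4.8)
The plan is to mimic the moment-generating-function argument of Lemma~\ref{Lemma:Change_of_intensities}, but to upgrade it from a single Poisson count to the full compound-Poisson object, so that one computation pins down the intensity and the jump-size law at the same time. The key tool is that for a compound Poisson sum over $(t_{i-1},t]$ with intensity $\lambda$ and mark density $f$, and for any bounded test function $g$,
$$\E\left[\exp\left\{\sum_{j=N_+(t_{i-1})+1}^{N_+(t)}g(\delta_{+,j})\right\}\bigg|\F(t_{i-1})\right] = \exp\left\{\lambda(t-t_{i-1})\left(\int e^{g(x)} f(x)\,\D x - 1\right)\right\},$$
and that this Laplace functional, as $g$ ranges over all bounded measurable functions, characterizes the joint conditional law of the count and its marks (conditioning on $\{N=n\}$ and using the i.i.d.\ structure lets one read off both $\mathbb P(N=n)$ and the mark density separately). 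Establishing that the right-hand side equals the same expression with $(\widetilde\lambda_+,\widetilde f)$ in place of $(\lambda_+,f)$ is then precisely the assertion of the lemma.

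First I would pass from $\Q$ to $\mathbb P$ by the abstract Bayes rule. Since $Z$ is a $\mathbb P$-martingale, for any $\F(t)$-measurable $Y$ with $t\leq t_i$ one has $\E^{\Q}[Y\mid\F(t_{i-1})]=\E^{\mathbb P}[\,Y\,Z(t)/Z(t_{i-1})\mid\F(t_{i-1})]$, where $Z(T)$ has been replaced by $Z(t)$ using the tower property and the martingale identity $\E^{\mathbb P}[Z(T)\mid\F(t)]=Z(t)$. Taking $Y=\exp\{\sum_j g(\delta_{+,j})\}$, with the sum over the ``$+$'' jumps in $(t_{i-1},t]$, and writing $Z(t)/Z(t_{i-1})=Z_{+,i}(t)Z_{-,i}(t)$, the conditional independence of the ``$+$'' and ``$-$'' components under $\mathbb P$ together with $\E^{\mathbb P}[Z_{-,i}(t)\mid\F(t_{i-1})]=1$ (from the proof of Lemma~\ref{Lemma:RadonNikodym2}) makes the ``$-$'' factor drop out, reducing the problem to evaluating $\E^{\mathbb P}[\,Y\,Z_{+,i}(t)\mid\F(t_{i-1})]$.

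Next I would substitute the explicit form of $Z_{+,i}(t)$. Pulling out the $\F(t_{i-1})$-measurable factor $\exp\{(\lambda_+(t_{i-1})-\widetilde\lambda_+(t_{i-1}))(t-t_{i-1})\}$ and merging the remaining log-ratio sum with the test function, the integrand becomes a compound-Poisson functional under $\mathbb P$ with effective test function $\tilde g(x)=g(x)+\log\frac{\widetilde\lambda_+(t_{i-1})\widetilde f(x)}{\lambda_+(t_{i-1}) f(x)}$. Applying the displayed formula with $\mathbb P$-intensity $\lambda_+(t_{i-1})$ and density $f$, and using the telescoping identity $\int e^{\tilde g(x)} f(x)\,\D x=\frac{\widetilde\lambda_+(t_{i-1})}{\lambda_+(t_{i-1})}\int e^{g(x)}\widetilde f(x)\,\D x$, the $\lambda_+$ and $f$ factors cancel and the pulled-out prefactor recombines, leaving exactly $\exp\{\widetilde\lambda_+(t_{i-1})(t-t_{i-1})(\int e^{g(x)}\widetilde f(x)\,\D x-1)\}$. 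As $g$ is arbitrary, this identifies the $\Q$-conditional law of the ``$+$'' compound Poisson object as having intensity $\widetilde\lambda_+(t_{i-1})$ and mark density $\widetilde f$; specializing to $g\equiv\xi$ constant recovers the Poisson moment generating function for the count $N_+(t)-N_+(t_{i-1})$ alone. The ``$-$'' component is handled identically after swapping signs.

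The main obstacle I anticipate is book-keeping rather than conceptual. Two points need care. The reduction from $Z(T)$ to $Z(t)$ requires the genuine martingale property of $Z$, which is stronger than the normalization $\E[Z(t)]=1$ recorded in Lemma~\ref{Lemma:RadonNikodym2}; I would therefore first note that $Z$ is a $\mathbb P$-martingale by repeating verbatim the telescoping argument of Theorem~\ref{Thm:RadonNikodym}, now fed by the per-interval identity $\E^{\mathbb P}[Z(t)/Z(t_{i-1})\mid\F(t_{i-1})]=1$. Second, one must justify that the exponential functional over all bounded $g$ truly characterizes the joint law of count and marks, not merely their marginals, which follows from the generating-function argument above. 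The algebraic cancellation $\int e^{\tilde g} f=\frac{\widetilde\lambda_+}{\lambda_+}\int e^{g}\widetilde f$ is the crux that converts the $\mathbb P$-dynamics into the $(\widetilde\lambda_+,\widetilde f)$-dynamics, and it is exactly where the definition of $\widetilde f$ and the structure of $Z_{+,i}$ do all the work.
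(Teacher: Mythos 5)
Your proof is correct, and it rests on the same underlying machinery as the paper's --- the factorization $Z(t)/Z(t_{i-1}) = Z_{+,i}(t)Z_{-,i}(t)$, conditional independence of the two components given $\F(t_{i-1})$, and the compound-Poisson exponential formula --- but it packages the argument in a genuinely different and tighter way. The paper runs two separate computations: first the conditional MGF of the bare count $N_+(t)-N_+(t_{i-1})$, with the random summands $Y_{+,j} = u + \log\bigl(\widetilde\lambda_+(t_{i-1})\widetilde f(\delta_{+,j})/(\lambda_+(t_{i-1}) f(\delta_{+,j}))\bigr)$ playing the role of your $\tilde g$ in the special case $g\equiv u$; and second, a stand-alone \emph{unconditional} computation of $\E^{\Q}[e^{u\delta_{+,1}}]$, which needs a ``without loss of generality the first jump occurs before $t_1$'' and a chain of independence claims (and contains several typos). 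Your single Laplace-functional computation with an arbitrary bounded test function $g$ subsumes both: $g\equiv\xi$ recovers the Poisson MGF for the count, and coefficient extraction in $e^{\xi}$ (conditioning on the event $\{N_+(t)-N_+(t_{i-1})=n\}$) yields the mark density --- moreover it does so conditionally on $\F(t_{i-1})$, which is cleaner and slightly stronger than the paper's unconditional treatment of the marks. You also correctly patch a gap the paper glides over: replacing $Z(T)$ by $Z(t)$ in the abstract Bayes formula requires the genuine martingale property $\E^{\mathbb P}[Z(T)\mid\F(t)]=Z(t)$, whereas Lemma~\ref{Lemma:RadonNikodym2} only records $\E[Z(t)]=1$; lifting the per-interval identity $\E^{\mathbb P}[Z(t)/Z(t_{i-1})\mid\F(t_{i-1})]=1$ (which that lemma's proof does establish) to the martingale property via the telescoping of Theorem~\ref{Thm:RadonNikodym} is exactly the right fix, and the paper implicitly relies on it too. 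One small caution: a single test function $g$ applied to all marks identifies the count law and the common marginal mark density --- which is all the lemma asserts --- but full joint independence of the marks under $\Q$ would require per-mark test functions $\sum_j g_j(\delta_{+,j})$; your conclusion as stated stays within what your computation proves.
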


\begin{proof}
Define $Z_{\pm,i}(t)$ as in the proof of Lemma~\ref{Lemma:RadonNikodym2}.
For a constant $u$, we have
\begin{eqnarray*}
& & \E^{\Q}[ \exp\{ u(N_+(t)-N_+(t_{i-1})) \} | \F(t_{i-1})] \\
&=& \E^{\mathbb P} \left[ \exp \left\{ u(N_+(t)-N_+(t_{i-1})) \right\} \frac{Z(t)}{Z(t_{i-1})} \bigg| \F(t_{i-1})\right]\\
&=& \E^{\mathbb P}[ \exp\{u(N_+(t)-N_+(t_{i-1}))\} Z_{+,i}(t)Z_{-,i}(t) | \F(t_{i-1})]\\
&=& \E^{\mathbb P}[ \exp\{u(N_+(t)-N_+(t_{i-1}))\} Z_{+,i}(t) | \F(t_{i-1})] \;\E^{\mathbb P}[ Z_{-,i}(t)| \F(t_{i-1}) ]\\
&=& \E^{\mathbb P}[ \exp\left\{u(N_+(t)-N_+(t_{i-1}))\right\} Z_{+,i}(t) | \F(t_{i-1})]
\end{eqnarray*}
and
\begin{eqnarray*}
& &\E^{\mathbb P} [ \exp\left\{u(N_+(t)-N_+(t_{i-1}))\right\} Z_{+,i}(t) | \F(t_{i-1})]\\
&=& \E^{\mathbb P} \bigg[ \exp\{u(N_+(t)-N_+(t_{i-1})) + (\lambda_+(t_{i-1}) - \widetilde\lambda_+ (t_{i-1}))(t-t_{i-1})\}  \\
& &\times\quad \exp\left\{  \sum_{j=N_+(t_{i-1})+1}^{N_+(t)} \log \frac{\widetilde\lambda_{+}(t_{i-1}) \widetilde f(\delta_{+,j})}{\lambda_{+}(t_{i-1}) f(\delta_{+,j})} \right\} \bigg| \F(t_{i-1})\bigg] \\
&=& \exp \{ (\lambda_+ (t_{i-1}) - \widetilde\lambda_+ (t_{i-1}))(t-t_{i-1}) \} \\
& &\times\quad \E^{\mathbb P} \left[ \exp\left\{\sum_{j=N_+(t_{i-1})+1}^{N_+(t)} \left( u+  \log \frac{\widetilde\lambda_{+}(t_{i-1}) \widetilde f(\delta_{+,j})}{\lambda_{+}(t_{i-1}) f(\delta_{+,j})}\right) \right\} \bigg| \F(t_{i-1})\right].
\end{eqnarray*}
Put
$$ Y_{+,j} = u + \log \frac{\widetilde\lambda_{+}(t_{i-1}) \widetilde f(\delta_{+,j})}{\lambda_{+}(t_{i-1}) f(\delta_{+,j})} $$
and let $\Phi$ be the conditional moment generating function of $Y_{+,j}$, i.e.
$$\Phi(z) = \mathbb E^{\mathbb P} [ \exp( z Y_{+,j} ) | \F(t_{i-1}) ].$$
Note that
\begin{eqnarray*}
\Phi(1) &=& \frac{\widetilde\lambda_{+}(t_{i-1})} {\lambda_{+}(t_{i-1}) } e^{u}\E^{\mathbb P} \left[ \frac{\widetilde f(\delta_{+,j})}{f(\delta_{+,j})} \bigg| \F(t_{i-1}) \right] \\
&=& \frac{\widetilde\lambda_{+}(t_{i-1})} {\lambda_{+}(t_{i-1}) } e^{u} \int_{-\infty}^{\infty} \frac{\widetilde f(x)}{f(x)} f(x) \D x \\
&=& \frac{\widetilde\lambda_{+}(t_{i-1})} {\lambda_{+}(t_{i-1}) } e^{u}.
\end{eqnarray*}
Then
$$
\E^{\mathbb P} \left[ \exp\left\{\sum_{j=N_+(t_{i-1})+1}^{N_+(t)}  Y_{+,j} \right\} \bigg| \F(t_{i-1})\right]
= \exp \left\{ \lambda_+(t_{i-1}) \left( \Phi(1) - 1\right)(t-t_{i-1}) \right\}.
$$
Note that
\begin{eqnarray*}
& &\E^{\mathbb P} [ \exp\left\{u(N_+(t)-N_+(t_{i-1}))\right\} Z_{+,i}(t) | \F(t_{i-1})]\\
&=& \exp \left\{ \left(\lambda_+ (t_{i-1}) - \widetilde\lambda_+ (t_{i-1}) + \lambda_+(t_{i-1}) \left( \frac{\widetilde\lambda_{+}(t_{i-1})} {\lambda_{+}(t_{i-1}) } e^{u} - 1\right) \right)(t-t_{i-1}) \right\} \\
&=& \exp \{ \widetilde\lambda_+(t_{i-1})(e^u - 1)(t-t_{i-1}) \},
\end{eqnarray*}
which is the moment generating function of a Poissson distribution with intensity $\widetilde \lambda_+(t_{i-1})$.
Hence $N_{+}(t)-N_+(t_{i-1})|\F(t_{i-1})$ is a Poissson distribution with intensity $\widetilde \lambda_+(t_{i-1})$.
For $N_-(t) - N_-(t_{i-1})$, the proof is the same.

For the remaining part of lemma, to figure out the distribution of $\delta_{\pm, j}$ under $\Q$,
it is enough to check the distribution of $\delta_{+, 1}$ under $\Q$.
Without loss of generality, assume that first jump, i.e. the occurrence time of random variable $\delta_{+, 1}$ is less than $t_1$.
Thus, for a constant $u$, we have
\begin{eqnarray*}
& & \E^{\Q} [e^{u\delta_{+,1}}]\\
&=& \E^{\mathbb P}[e^{u\delta_{+,1}} Z(T) ]\\
&=&  \E^{\mathbb P}[e^{u\delta_{+,1}} Z_{+,1}(t_1)Z_{-,1}(t_1)Z(T)/Z(t_1) ] \\
&=& \E^{\mathbb P} \bigg[ \exp \left\{ (\lambda_+(t_0) - \widetilde \lambda_-(t_0))\Delta t + N_+(t_1)\log \frac{\lambda_+(t)}{\lambda_-(t)}  \right\} \\
& &\times\quad e^{u\delta_+,1}\frac{\widetilde f(\delta_{+,1})}{f(\delta_{+,1})}  \prod_{j=2}^{N_+(t_1)} \frac{\widetilde f(\delta_{+,j})}{f(\delta_{+,j})} \bigg].
\end{eqnarray*}
The last equality holds since given random variables independent and since
$$\E[Z_{-,1}(t_1)] = \E \left[  \frac{Z(T)}{Z(t_1)} \right] =1.$$
Furthermore, because of independency and the fact that
$$\E^{\mathbb P} \left[ \exp \left\{ (\lambda_+(t_0) - \widetilde \lambda_-(t_0))\Delta t + N_+(t_1)\log \frac{\lambda_+(t)}{\lambda_-(t)}  \right\} \right]=1 $$
(by Lemma~\ref{Lemma:RadonNikodym2} when $\delta$ is constant) and
$$\E^{\mathbb P} \left[ \frac{\widetilde f(\delta_{+,j})}{f(\delta_{+,j})} \right] =1 $$ for each $j$.
Finally we have
\begin{eqnarray*}
\E^{\Q} [e^{u\delta_{+,1}}] &=& \E^{\mathbb P} \left[e^{u\delta_+,1}\frac{\widetilde f(\delta_{+,1})}{f(\delta_{+,1})} \right] \\
&=& \int_{-\infty}^{\infty} e^{ux} \frac{\widetilde f(x)}{f(x)} f(x) \D x \\
&=& \int_{-\infty}^{\infty} e^{ux} \widetilde f(x) \D x,
\end{eqnarray*}
which implies $\delta_{+,1}$ has a probability density function $\widetilde f$ under $\Q$.
For general $j$, the proof is the same.
\end{proof}

The choice for risk-neutral distributions $\widetilde f_+$ and $\widetilde f_-$ are related to the skewness in conditional distribution of log-return under risk-neutral measure.
This is similar to the fact that distributions $f_+$ and $f_-$ are related to the skewness under physical measure.
Now we show that under the equivalent martingale measure $\Q$, the discounted stock price process is a martingale.
\begin{theorem}\label{Thm:martingale2}
Under the measure $\mathbb Q$ defined by \eqref{Eq:measure2}, we have
$$\mathbb E^{\mathbb Q}[e^{-r(t-u)}S(t)|\F(u)] = S(u)$$
for $0<u<t<T$.
\end{theorem}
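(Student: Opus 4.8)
The plan is to mirror the two-level structure of the proof of Theorem~\ref{Thm:martingale}: first prove the martingale identity across a single time step $t_{i-1}\le u<t\le t_i$, then chain the single-step identities over the grid by the tower property. Thus it suffices to show
$$\mathbb E^{\mathbb Q}[S(t)\mid \F(u)] = S(u)\,e^{r(t-u)}, \qquad t_{i-1}\le u<t\le t_i,$$
because the general case $0<u<t<T$ follows by inserting the intermediate nodes $t_j$ and conditioning successively, precisely as in the telescoping computation at the end of the proof of Theorem~\ref{Thm:martingale}.

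For the single-step identity, I would first write
$$S(t)=S(u)\exp\Big(\sum_{j=N_+(u)+1}^{N_+(t)}\delta_{+,j}-\sum_{j=N_-(u)+1}^{N_-(t)}\delta_{-,j}\Big),$$
so that $\mathbb E^{\mathbb Q}[S(t)\mid\F(u)]$ equals $S(u)$ times the conditional $\mathbb Q$-expectation of this exponential. Since $N_+(t)-N_+(u)$ and $N_-(t)-N_-(u)$ are conditionally independent given $\F(u)$ (Assumption~\ref{Assumption2}(iii)), this expectation factors into a positive-jump factor and a negative-jump factor.

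The heart of the argument is the evaluation of each factor through Lemma~\ref{Lem:newIntensity2}. That lemma supplies, under $\mathbb Q$, that $N_+(t)-N_+(u)$ is Poisson with intensity $\widetilde\lambda_+(t_{i-1})(t-u)$ and that the sizes $\delta_{+,j}$ are i.i.d. with density $\widetilde f_+$ and independent of the count, with the analogous statement on the minus side. Each factor is therefore a compound-Poisson moment generating function, and the identity $\mathbb E[e^{\sum_{j=1}^{N}Y_j}]=\exp(\Lambda(\mathbb E[e^{Y}]-1))$ gives
$$\mathbb E^{\mathbb Q}\Big[\exp\Big(\sum_{j=N_+(u)+1}^{N_+(t)}\delta_{+,j}\Big)\,\Big|\,\F(u)\Big]=\exp\{\widetilde\lambda_+(t_{i-1})(t-u)(\widetilde\phi_{\delta_+}-1)\},$$
with the minus factor equal to $\exp\{\widetilde\lambda_-(t_{i-1})(t-u)(\widetilde\phi_{\delta_-}-1)\}$, where $\widetilde\phi_{\delta_+}=\mathbb E^{\mathbb Q}[e^{\delta_+}]$ and $\widetilde\phi_{\delta_-}=\mathbb E^{\mathbb Q}[e^{-\delta_-}]$ as in Definition~\ref{Def:RadonNikodym2}.

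Finally I would invoke the risk-neutral calibration \eqref{Eq:riskEquation}, read as the rate identity $(\widetilde\phi_{\delta_+}-1)\widetilde\lambda_+(t_{i-1})+(\widetilde\phi_{\delta_-}-1)\widetilde\lambda_-(t_{i-1})=r$ on the interval $(t_{i-1},t_i]$, to collapse the product of the two factors to $e^{r(t-u)}$; this gives the single-step identity, and chaining finishes the proof. I expect the only real obstacle to be bookkeeping rather than conceptual: one must apply the compound-Poisson formula with the tilted jump law $\widetilde f_\pm$ and the tilted intensity $\widetilde\lambda_\pm(t_{i-1})$ (both furnished by Lemma~\ref{Lem:newIntensity2}), and confirm that the predictable intensity governing $(t_{i-1},t_i]$ is indeed $\widetilde\lambda_\pm(t_{i-1})$, so that the calibration \eqref{Eq:riskEquation} being used is the correct one.
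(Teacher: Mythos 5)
Your overall architecture matches the paper's: a single-step identity on $t_{i-1}\le u<t\le t_i$, a compound-Poisson moment generating function evaluated at the tilted intensity and jump law, the calibration \eqref{Eq:riskEquation} read with right-hand side $r$ (you are right to read it that way --- that is how the paper's own final step uses it, despite the $\Delta t$ appearing in Definition~\ref{Def:RadonNikodym2}(i)), and then the tower property. The final arithmetic is identical. However, your central step has a genuine gap: you derive the two factors by applying the compound-Poisson formula \emph{under $\Q$}, citing Lemma~\ref{Lem:newIntensity2} as furnishing the needed structure. That lemma establishes strictly less than you use: it gives the conditional Poisson law of the counts $N_\pm(t)-N_\pm(t_{i-1})$ given $\F(t_{i-1})$ and the marginal density $\widetilde f$ of an individual size $\delta_{+,j}$ under $\Q$. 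Your computation additionally requires that, under $\Q$ and conditionally on $\F(u)$, the sizes remain i.i.d.\ and independent of the counts, and that the positive and negative compound sums are conditionally independent. You cite Assumption~\ref{Assumption2}(iii) for the factorization, but that assumption is a statement about $\mathbb P$; conditional independence is not automatically preserved by an equivalent change of measure and must be reproved under $\Q$. (It does hold here, because the density factorizes as $Z(t)/Z(t_{i-1})=Z_{+,i}(t)Z_{-,i}(t)$ with one factor per side, but that argument does not appear in your proposal, nor in the statement of the lemma you invoke.)

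The paper sidesteps all of these joint-law questions by never computing under $\Q$ at all: it writes $\E^{\Q}[S(t)\mid\F(u)]$ as a $\mathbb P$-expectation against $Z(t)/Z(u)$, absorbs the density into the exponent via $W_{+,j}=\delta_{+,j}+\log\bigl(\widetilde\lambda_+(t_{i-1})\widetilde f(\delta_{+,j})/(\lambda_+(t_{i-1})f(\delta_{+,j}))\bigr)$ and its minus-side analogue, and applies the compound-Poisson formula under $\mathbb P$, where Assumption~\ref{Assumption2} does supply the full i.i.d./independence structure. The evaluation $\phi_{W_+}(1)=\bigl(\widetilde\lambda_+(t_{i-1})/\lambda_+(t_{i-1})\bigr)\widetilde\phi_{\delta_+}$ then produces the factor $\exp\{(\widetilde\lambda_+(t_{i-1})\widetilde\phi_{\delta_+}-\lambda_+(t_{i-1}))(t-u)\}$, which combines with the $\kappa_{i-1}$ term and \eqref{Eq:riskEquation} to give $e^{r(t-u)}$, exactly your target. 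To repair your proof, either follow that route, or strengthen Lemma~\ref{Lem:newIntensity2} first by computing the joint conditional transform $\E^{\Q}[\exp\{\xi(N_+(t)-N_+(u))+z\sum_j\delta_{+,j}+\cdots\}\mid\F(u)]$ as a $\mathbb P$-expectation with the density: the product form of $Z$ delivers precisely the conditional compound-Poisson structure with independent plus and minus sides that your argument presupposes.
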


\begin{proof}
Take $u,t$ such that $t_{i-1}\leq u<t\leq t_i$.
\begin{eqnarray*}
& &\mathbb E^{\mathbb Q}[S(u) | \mathcal F(t)] \\
&=& \mathbb E^{\mathbb Q} \left[S(t)\exp \left( \sum_{j=N_+(u)+1}^{N_+(t)} \delta_{+, j} - \sum_{j=N_-(u)+1}^{N_-(t)} \delta_{-, j} \right) \bigg| \mathcal F(t) \right] \\
&=&  \mathbb E^{\mathbb Q} \left[ S(t) \exp \left( \sum_{j=N_+(u)+1}^{N_+(t)} \delta_{+, j} - \sum_{j=N_-(u)+1}^{N_-(t)} \delta_{-, j} \right) \frac{Z(t)}{Z(u)} \bigg| \mathcal F(t) \right] \\
&=& S(t) \exp \{ (\lambda_+(t_{i-1}) - \widetilde \lambda_+(t_{i-1}) + \lambda_-(t_{i-1}) - \lambda_-(t_{i-1})) (t-u) \} \\
& &\times\quad \E^{\mathbb P} \left[ \exp \left\{ \sum_{j=N_+(u)+1}^{N_+(t)} \left( \delta_{+, j} + \log \frac{\widetilde\lambda_+(t_{i-1})\widetilde f(\delta_{+,j})}{\lambda_+(t_{i-1}) f(\delta_{+,j})} \right) \right\} \bigg| \mathcal F(t) \right]\\
& &\times\quad \E^{\mathbb P} \left[ \exp \left\{ - \sum_{j=N_-(u)+1}^{N_-(t)} \left( \delta_{-, j} + \log \frac{\widetilde\lambda_-(t_{i-1})\widetilde f(\delta_{-,j})}{\lambda_-(t_{i-1}) f(\delta_{-,j})} \right) \right\} \bigg| \mathcal F(t) \right].
\end{eqnarray*}
Let $\phi_{W_+}$ be the conditional moment generating function of
$$ W_{+,j} = \delta_{+, j} + \log \frac{\widetilde\lambda_+(t_{i-1})\widetilde f(\delta_{+,j})}{\lambda_+(t_{i-1}) f(\delta_{+,j})}$$
given filtration $\F(t)$.
Then
\begin{eqnarray*}
\phi_{W_+}(1) &=& \E^{\mathbb P} [e^{W_{+,j}} | \F(t)]
= \frac{\widetilde\lambda_{+}(t_{i-1})} {\lambda_{+}(t_{i-1}) } \, \E^{\mathbb P} \left[ e^\delta_{+, j} \frac{\widetilde f(\delta_{+,j})}{f(\delta_{+,j})} \bigg| \F(t) \right] \\
&=& \frac{\widetilde\lambda_{+}(t_{i-1})} {\lambda_{+}(t_{i-1}) }  \int_{-\infty}^{\infty} e^x \frac{\widetilde f(x)}{f(x)} f(x) \D x \\
&=& \frac{\widetilde\lambda_{+}(t_{i-1})} {\lambda_{+}(t_{i-1}) } \, \widetilde\phi_{\delta_+}
\end{eqnarray*}
where $\widetilde\phi_{\delta_+}$ is defined in Definition~\ref{Def:RadonNikodym2}.
Hence
\begin{eqnarray*}
& & \mathbb E^{\mathbb P} \left[ \exp \left\{ \sum_{j=N_+(u)+1}^{N_+(t)} \left( \delta_{+, j} + \log \frac{\widetilde\lambda_+(t_{i-1})\widetilde f(\delta_{+,j})}{\lambda_+(t_{i-1}) f(\delta_{+,j})} \right) \right\} \bigg| \F(u)\right] \\
&=& \exp \{ \lambda_+(t_{i-1})(t-u)(\phi_{W_+}(1)-1)\} \\
&=& \exp \{ (\widetilde\lambda_+(t_{i-1})\widetilde\phi_{\delta_+} - \lambda(t_{i-1}))(t-u)\}
\end{eqnarray*}
and
\begin{eqnarray*}
& &\mathbb E^{\mathbb Q}[S(u) | \mathcal F(t)] \\
&=& S(t) \exp \left\{ (\lambda_+(t_{i-1}) - \widetilde \lambda_+(t_{i-1}) + \lambda_+(t_{i-1}) - \widetilde \lambda_+(t_{i-1}))(t-u) \right\} \\
& &\times \exp \{ (\widetilde \lambda_{+}(t_{i-1}) \widetilde\phi_{\delta_+} - \lambda_{+}(t_{i-1}) + \widetilde \lambda_{-}(t_{i-1}) \widetilde\phi_{\delta_+} - \lambda_{-}(t_{i-1}))(t-u) \} \\
&=& S(t) \exp \{ (\widetilde \lambda_{+}(t) \widetilde\phi_{\delta_+} - \widetilde\lambda_{+}(t) + \widetilde \lambda_{-}(t) \widetilde\phi_{\delta_+} - \widetilde\lambda_{-}(t))(t-u)\} \\
&=& S(t) e^{r(t-u)}.
\end{eqnarray*}
The last equality is due to Definition~\ref{Def:RadonNikodym2}(i).
By applying the tower property, we obtain the desired result for arbitrary $u$ and $t$.
\end{proof}

\section{Concluding remark}\label{Sect:concl}
The risk-neutral option pricing framework for the GARCH intensity model was introduced.
Equivalent martingale measures are provided and hence the the risk-neutral option price is computed under the measure.
The framework is consistent with the empirical characteristics such as volatility smile and spread.
The theory is easily extended to the generalized version of the GARCH intensity model.

\bibliography{Poisson_intensity_bib}
\bibliographystyle{jbact}
\end{document}